\newtheorem{theorem}{Theorem}
\newtheorem{lemma}[theorem]{Lemma}
\newtheorem{definition}[theorem]{Definition}
\newenvironment{proof}{\emph{Proof.}}{\smallskip}
\def\squareforqed{\hbox{\rlap{$\sqcap$}$\sqcup$}}
\def\qed{\ifmmode\squareforqed\else{\unskip\nobreak\hfil
\penalty50\hskip1em\null\nobreak\hfil\squareforqed
\parfillskip=0pt\finalhyphendemerits=0\endgraf}\fi}
\begin{document}
%\pagenumbering{gobble}
%\nocopyrightcommand
%\copyrighttext{}
\nocopyright

\title{Tableau vs.\ Sequent Calculi for Minimal Entailment}

\author{Olaf Beyersdorff\thanks{Supported by a grant from the John Templeton Foundation.} \and Leroy Chew\thanks{Supported by a Doctoral Training Grant from EPSRC.}\\
School of Computing, University of Leeds, UK
  %\email{o.beyersdorff@leeds.ac.uk}
  }

\maketitle

\begin{abstract}
In this paper we compare two proof systems for minimal entailment: a tableau system \OTAB and a sequent calculus \MLK, both developed by Olivetti (1992). Our main result shows that \OTAB-proofs can be efficiently translated into \MLK-proofs, \ie \MLK p-simulates \OTAB. The simulation is technically very involved and answers an open question posed by Olivetti (1992) on the relation between the two calculi. We also show that the two systems are exponentially separated, \ie there are formulas which have polynomial-size \MLK-proofs, but require exponential-size \OTAB-proofs.  
\end{abstract}

\section{Introduction}

Minimal entailment is the most important special case of circumscription, which in turn is one of the main formalisms for non-monotonic reasoning \cite{McC80}. The key intuition behind minimal entailment is the notion of minimal models, providing as few exceptions as possible. Apart from its foundational relation to human reasoning, minimal entailment has wide-spread applications, e.g.\ in AI, description logics \cite{BLW09,GH09,GGOP13} and SAT solving \cite{JM11}. %There has been recent interest for tableau systems for description logics including work on circumscription such as in \cite{GH09} and \cite{GGOP13}.

While the complexity of non-monotonic logics has been thoroughly studied --- cf.\ e.g.\ the recent papers  \cite{DHN12,Tho12,BLW09} or the survey \cite{TV10} --- considerably less is known about the complexity of theorem proving in these logics. This is despite the fact that a number of quite different formalisms have been introduced for circumscription and minimal entailment \cite{Oli92,Nie96,BO02,GH09,GGOP13}. While proof complexity has traditionally focused on proof systems for classical propositional logic, there has been remarkable interest in proof complexity of non-classical logics during the last decade. A number of exciting results have been obtained --- in particular for modal and intuitionistic logics \cite{Hru09,Jer09} ---  and interesting phenomena have been observed that show a quite different picture from classical proof complexity, cf.\ \cite{BK12} for a survey. 

In this paper we focus our attention at two very different formalisms for minimal entailment: a sequent calculus \MLK and a tableau system \OTAB, both developed by Olivetti \shortcite
{Oli92}.\footnote{While the name \MLK is Olivetti's original notation \cite{Oli92}, we introduce the name \OTAB here as shorthand for Olivetti's tableau. By \NTAB we denote another tableau for minimal entailment suggested by \Niemela \shortcite{Nie96}, cf.\ the conclusion of this paper.
}  These systems are very natural and elegant, and in fact they were both inspired by their classical propositional counterparts: Gentzen's \LK \shortcite{Gen35} and Smullyan's analytic tableau \shortcite{Smu68}. 

Our main contribution is to show a p-simulation of \OTAB by \MLK, \ie proofs in \OTAB can be efficiently transformed into \MLK-derivations. This answers an open question by Olivetti \shortcite{Oli92} on the relationship between these two calculi. At first sight, our result might not appear unexpected as sequent calculi are usually stronger than tableau systems, cf. e.g. \cite{Urq95}. However, the situation is more complicated here, and even Olivetti himself did not seem to have a clear conjecture as to whether such a simulation should be expected, cf.\ the remark after Theorem~8 in \cite{Oli92}.

The reason for the complication lies in the nature of the tableau: while rules in \MLK are `local', \ie they refer to only two previous sequents in the proof, the conditions to close branches in \OTAB are `global' as they refer to other branches in the tableau, and this reference is even recursive. The trick we use to overcome this difficulty is to annotate nodes in the tableau with additional information that `localises' the global information. This annotation is possible in polynomial time. The annotated nodes are then translated into minimal entailment sequents that form the skeleton of the \MLK derivation for the p-simulation. 

In addition to the p-simulation of \OTAB by \MLK, we obtain an exponential separation between the two systems, \ie there are formulas which have polynomial-size proofs in \MLK, but require exponential-size \OTAB tableaux. In proof complexity, lower bounds and separations are usually much harder to show than simulations, and indeed there are famous examples where simulations have been known for a long time, but separations are currently out of reach, cf.\ \cite{Kra95}. In contrast, the situation is opposite here: while the separation carries over rather straightforwardly from the comparison between classical tableau and \LK, the proof of the simulation result is technically very involved.

This paper is organised as follows. We start by recalling basic definitions from minimal entailment and proof complexity, and explaining Olivetti's systems \MLK and \OTAB for minimal entailment \cite{Oli92}. This is followed by two sections containing the p-simulation and the separation of \OTAB and \MLK. In the last section, we conclude  by placing our results into the global picture of proof complexity research on circumscription and non-monotonic logics.

%\ref{sec:prelim}
%\ref{sec:def-MLK-OTAB}
%\ref{sec:simulation-OTAB-MLK}
%\ref{sec:separation-OTAB-MLK}
%\ref{sec:conclusion}

\section{Preliminaries
\label{sec:prelim}}

Our propositional language contains the logical symbols $ \bot,\top,\neg,\vee,\wedge,\rightarrow$. 
For a set of formulae $\Sigma$, $\VAR{(\Sigma)}$ is the set of all atoms that occur in $\Sigma$. For a set $P$ of atoms we set $\neg P =\{\neg p \mid p\in P\}$. Disjoint union of two sets $A$ and $B$ is denoted by $A\sqcup B$.

\subsubsection{Minimal Entailment.}

Minimal entailment is a form of non-monotonic reasoning developed as a special case of McCarthy's circumscription \cite{McC80}. Minimal entailment comes both in a propositional and a first-order variant. Here we consider only the version of minimal entailment for propositional logic. We identify models with sets of positive atoms and use the partial ordering $\subseteq$ based on inclusion. This gives rise to a natural notion of minimal model for a set of formulae, in which the number of positive atoms is minimised with respect to inclusion. For a set of propositional formulae $\Gamma$ we say that $\Gamma$ minimally entails a formula $\phi$ if all minimal models of $\Gamma$ also satisfies $\phi$. We denote this entailment by $\Gamma \vDash_M \phi$.

\subsubsection{Proof Complexity.}
A \emph{proof system} \cite{CR79} for a language $L$ over alphabet $\Gamma$ is a polynomial-time computable partial function $f:\Gamma^\star\rightharpoondown\Gamma^\star$ with $\mathit{rng}(f)=L$.
%$\{y\in\Gamma^\star \mid \exists x\in \Gamma^\star \mbox{ s.t. } f(x)=y \}=L$.
An \emph{$f$-proof} of string $y$ is a string $x$ such that $f(x)=y$.

%From this we can start defining proof size. For $f$ a proof system for language $L$ and string $x\in L$ we define $s_{f}(x)=\min(|w| : f(w)=x)$. Thus the partial function $s_{f}$ tells us the minimum proof size of a theorem. We can overload the notation by setting $s_{f}(n)=\max(s_{f}(x) : |x| \leq n)$ where $ n \in \mathbb{N} $. For a function $t: \mathbb{N} \rightarrow\mathbb{N}$, a proof system $f$ is called \emph{$t$-bounded} if $\forall n\in \mathbb{N} , s_{f}(n) \leq t(n)$. 

Proof systems are compared by simulations. We say that a proof system $f$ \emph{simulates} $g$ ($g\leq f$) if
there exists a polynomial $p$ such that for every $g$-proof $\pi_g$
there is an $f$-proof $\pi_f$ with $f(\pi_f)=g(\pi_g)$ and
$\abs{\pi_f}\leq p(\abs{\pi_g})$. If $\pi_f$ can even be constructed
from $\pi_g$ in polynomial time, then we say that $f$
\emph{p-simulates} $g$ ($g\leq_p f$). 
Two proof systems $f$ and $g$ are \emph{(p-)equivalent} ($g\equiv_{(p)} f$) if they mutually
(p-)simulate each other.

The sequent calculus of \emph{Gentzen's system \LK} is one of the historically first and best studied proof systems \cite{Gen35}. In \LK a sequent is usually written in the form $ \Gamma \vdash \Delta$. Formally, a \emph{sequent} is a pair ($\Gamma$,$\Delta$) with $\Gamma$ and $\Delta$ finite sets of formulae. In classical logic $ \Gamma \vdash \Delta$ is true if every model for $\bigwedge\Gamma$ is also a model of $\bigvee\Delta$, where the disjunction of the empty set is taken as $\bot$ and the conjunction as $\top$.
The system can be used both for propositional and first-order logic; the propositional rules are displayed in Fig.~\ref{fig_LK}. 
Notice that the rules here do not contain structural rules for contraction or exchange. These come for free as we chose to operate with sets of formulae rather than sequences.
Note the soundness of rule ($\bullet\vdash$), which gives us monotonicity of classical propositional logic. 

\begin{figure}[h]
\framebox{\parbox{\breite}
{%\small
\begin{prooftree}
\AxiomC{}
\RightLabel{($\vdash$)}
\UnaryInfC{$A\vdash A$}
\DisplayProof\hspace{0.5cm}
\AxiomC{}
\RightLabel{($\bot\vdash$)}
\UnaryInfC{$\bot\vdash$}
\DisplayProof\hspace{0.5cm}
\AxiomC{}
\RightLabel{($\vdash\top$)}
\UnaryInfC{$\vdash\top$}
\end{prooftree}
\begin{prooftree}
  \AxiomC{$\Gamma\vdash\Sigma$}
  \RightLabel{($\bullet\vdash$)}
  \UnaryInfC{$\Delta,\Gamma\vdash\Sigma$}
  \DisplayProof\hspace{1cm}
  \AxiomC{$\Gamma\vdash\Sigma$}
  \RightLabel{($\vdash\bullet$)}
  \UnaryInfC{$\Gamma\vdash\Sigma,\Delta$}
%  \DisplayProof\hspace{0.7cm}
\end{prooftree}
\begin{prooftree}
  \AxiomC{$\Gamma\vdash\Sigma,A$}
  \RightLabel{($\neg\vdash$)}
  \UnaryInfC{$\neg A,\Gamma\vdash\Sigma$}
%\end{prooftree}
%\begin{prooftree}
 \DisplayProof\hspace{0.7cm}
  \AxiomC{$A,\Gamma\vdash\Sigma$}
  \RightLabel{($\vdash\neg$)}
  \UnaryInfC{$\Gamma\vdash\Sigma,\neg A$}
%\DisplayProof\hspace{0.7cm}
\end{prooftree}
\begin{prooftree}
  \AxiomC{$A,\Gamma\vdash\Sigma$}
  \RightLabel{($\bullet\wedge\vdash$)}
  \UnaryInfC{$B\wedge A,\Gamma\vdash\Sigma$}
  \DisplayProof\hspace{0.7cm}
  \AxiomC{$A,\Gamma\vdash\Sigma$}
  \RightLabel{($\wedge\bullet\vdash$)}
  \UnaryInfC{$A\wedge B,\Gamma\vdash\Sigma$}
\end{prooftree}
\begin{prooftree}
% \DisplayProof\hspace{0.7cm}
  \AxiomC{$\Gamma\vdash\Sigma,A$}
  \AxiomC{$\Gamma\vdash\Sigma,B$}
  \RightLabel{($\vdash\wedge$)}
  \BinaryInfC{$\Gamma\vdash\Sigma,A\wedge B$}
\end{prooftree}
\begin{prooftree}  
%\DisplayProof\hspace{0.7cm}
  \AxiomC{$A,\Gamma\vdash\Sigma$}
  \AxiomC{$B,\Gamma\vdash\Sigma$}
  \RightLabel{($\vee\vdash$)}
  \BinaryInfC{$A\vee B,\Gamma\vdash\Sigma$}
\end{prooftree}
\begin{prooftree}
%  \DisplayProof\hspace{0.7cm}
  \AxiomC{$\Gamma\vdash\Sigma,A$}
  \RightLabel{($\vdash\bullet\vee$)}
  \UnaryInfC{$\Gamma\vdash\Sigma,B\vee A$}
  \DisplayProof\hspace{0.5cm}
  \AxiomC{$\Gamma\vdash\Sigma,A$}
  \RightLabel{($\vdash\vee\bullet$)}
  \UnaryInfC{$\Gamma\vdash\Sigma,A\vee B$}
\end{prooftree}
\begin{prooftree}
%  \DisplayProof\hspace{0.5cm}
  \AxiomC{$A,\Gamma\vdash\Sigma,B$}
  \RightLabel{($\vdash\rightarrow$)}
  \UnaryInfC{$\Gamma\vdash\Sigma,A \rightarrow B$}
\end{prooftree}
\begin{prooftree}
%  \DisplayProof\hspace{0.7cm}
  \AxiomC{$\Gamma\vdash\Sigma, A$}
  \AxiomC{$B, \Delta\vdash\Lambda$}
  \RightLabel{($\rightarrow\vdash$)}
  \BinaryInfC{$A\rightarrow B, \Gamma,\Delta\vdash\Sigma, \Lambda$}
 % \DisplayProof\hspace{0.7cm}
 \end{prooftree}
\begin{prooftree}
  \AxiomC{$\Gamma\vdash\Sigma,A$}
  \AxiomC{$A,\Gamma\vdash\Sigma$}
  \RightLabel{(cut)}
  \BinaryInfC{$\Gamma\vdash\Sigma$}
\end{prooftree}
\caption{Rules of the sequent calculus \LK \cite{Gen35} }
\label{fig_LK} 
}}
\end{figure}

\section{Olivetti's sequent calculus and tableau system for minimal entailment}
\label{sec:def-MLK-OTAB}

In this section we review two proof systems for minimal entailment, which were developed by Olivetti \shortcite{Oli92}. We start with the sequent calculus \MLK. Semantically, a minimal entailment sequent $\Gamma\vdash_M \Delta$ is true if and only if in all minimals models of $\bigwedge \Gamma$ the formula $\bigvee \Delta$ is satisfied. In addition to all axioms and rules from \LK, the calculus \MLK comprises the axioms and rules detailed in Figure~\ref{fig_MLK}. In the \MLK axiom, the notion of a \emph{positive} atom $p$ in a formula $\phi$ is defined inductively by counting the number of negations and implications in $\phi$ on top of $p$ (cf.\ \cite{Oli92} for the precise definition).

\begin{figure}[h]
\framebox{\parbox{\breite}
{%\small
\begin{prooftree}
\AxiomC{}
\RightLabel{($\vdash_{M}$)}
\UnaryInfC{$\Gamma\vdash_{M}\neg p$}
\end{prooftree}
where $p$ is an atom that does not occur positively in any formula in $\Gamma$
\begin{prooftree}
\AxiomC{$\Gamma\vdash\Delta$}
\RightLabel{($\vdash\vdash_{M}$)}
\UnaryInfC{$\Gamma\vdash_{M}\Delta$}
\end{prooftree}
\begin{prooftree}
\AxiomC{$\Gamma\vdash_{M}\Sigma,A$}
\AxiomC{$A,\Gamma\vdash_{M}\Lambda$}
\RightLabel{(M-cut)}
\BinaryInfC{$\Gamma\vdash_M\Sigma,\Lambda$}
\end{prooftree}
\begin{prooftree}
\AxiomC{$\Gamma\vdash_{M}\Sigma$}
\AxiomC{$\Gamma\vdash_{M}\Delta$}
\RightLabel{($\bullet\vdash_{M}$)}
\BinaryInfC{$\Gamma,\Sigma\vdash_{M}\Delta$}
\end{prooftree}
\begin{prooftree}
  \AxiomC{$\Gamma\vdash_{M}\Sigma,A$}
  \AxiomC{$\Gamma\vdash_{M}\Sigma,B$}
  \RightLabel{($\vdash_{M}\wedge$)}
  \BinaryInfC{$\Gamma\vdash_{M}\Sigma,A\wedge B$}
\end{prooftree}
\begin{prooftree}
  \AxiomC{$A,\Gamma\vdash_{M}\Sigma$}
  \AxiomC{$B,\Gamma\vdash_{M}\Sigma$}
  \RightLabel{($\vee\vdash_{M}$)}
  \BinaryInfC{$A\vee B,\Gamma\vdash_{M}\Sigma$}
\end{prooftree}
\begin{prooftree}
  \AxiomC{$\Gamma\vdash_{M}\Sigma,A$}
  \RightLabel{($\vdash_{M}\bullet\vee$)}
  \UnaryInfC{$\Gamma\vdash_{M}\Sigma,B\vee A$}
\end{prooftree}
\begin{prooftree}
  \AxiomC{$\Gamma\vdash_{M}\Sigma,A$}
  \RightLabel{($\vdash_{M}\vee\bullet$)}
  \UnaryInfC{$\Gamma\vdash_{M}\Sigma,A\vee B$}
\end{prooftree}
\begin{prooftree}
  \AxiomC{$A,\Gamma\vdash_{M}\Sigma$}
  \RightLabel{($\vdash_{M}\neg$)}
  \UnaryInfC{$\Gamma\vdash_{M}\Sigma,\neg A$}
\end{prooftree}
\begin{prooftree}
  \AxiomC{$A,\Gamma\vdash_{M}\Sigma,B$}
  \RightLabel{($\vdash_{M}\rightarrow$)}
  \UnaryInfC{$\Gamma\vdash_{M}\Sigma,A \rightarrow B$}
\end{prooftree}
\caption{  \label{fig_MLK} Rules of the sequent calculus \MLK for minimal entailment \cite{Oli92} }}}
\end{figure}

\begin{theorem} \textbf{(Theorem 8 in \cite{Oli92})} \label{thm_MLK_comp}
A sequent $\Gamma\vdash_{M}\Delta$  is true iff it is derivable in \MLK.
\end{theorem}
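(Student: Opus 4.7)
The plan is to prove soundness and completeness separately.

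For soundness (derivability implies truth), I would proceed by structural induction on the \MLK derivation, verifying that each rule preserves the validity of minimal-entailment sequents. Most rules mirror \LK rules with $\vdash$ replaced by $\vdash_M$, and the same semantic arguments adapt. The essentially new content lies in the axiom $(\vdash_M)$ and the conversion rule $(\vdash\vdash_M)$. For the axiom, suppose $p$ does not occur positively in $\Gamma$ and $M$ is any minimal model of $\Gamma$; then $M \setminus \{p\}$ still satisfies $\Gamma$ by the standard monotonicity-of-positive-atoms argument, so minimality of $M$ forces $p \notin M$, giving $M \models \neg p$. The rule $(\vdash\vdash_M)$ is immediate since minimal models are classical models. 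The remaining rules (M-cut), $(\bullet\vdash_M)$, the binary connective rules, and the single-premise rules unpack cleanly from the semantic definition of $\vDash_M$.

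For completeness, suppose $\Gamma \vDash_M \Delta$. Only the finitely many atoms in $V = \VAR(\Gamma \cup \Delta)$ are semantically relevant, so the strategy is a semantic case analysis over candidate truth assignments $P \subseteq V$, assembling the results via the branching and cut rules of \MLK. For each $P \subseteq V$ let $\sigma_P^- = \{\neg q : q \in V \setminus P\}$ encode the negative content of $P$. The central subgoal is to derive, for each $P$, an auxiliary sequent that pins down the role of $P$: if $P$ is not a classical model of $\Gamma$, the task reduces to classical derivability via $(\vdash\vdash_M)$; if $P$ is a minimal model of $\Gamma$, the hypothesis yields $P \models \bigvee \Delta$ and again classical derivability suffices; and if $P$ classically models $\Gamma$ but is non-minimal, witnessed by some $P' \subsetneq P$, then the atoms in $P \setminus P'$ should provide leverage to invoke the $(\vdash_M)$ axiom and close the case. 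These pieces are then to be combined into a single derivation of $\Gamma \vdash_M \Delta$ by case-splitting on each atom $q \in V$: apply (M-cut) against the classical tautology $q \vee \neg q$ lifted via $(\vdash\vdash_M)$, and use $(\vee\vdash_M)$ to dispatch the two branches.

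The main obstacle is the non-minimal-model case, because the $(\vdash_M)$ axiom has a syntactic side condition (no positive occurrence of $p$ in the antecedent) whereas non-minimality is a purely semantic condition. An atom $p \in P \setminus P'$ may well occur positively inside some formula of $\Gamma$ (for instance within a disjunction), which blocks a direct application of the axiom. To bridge this gap I would pre-process $\Gamma$ by unfolding connectives via $(\vee\vdash_M)$ and its companion rules, so that within each refined branch the semantically irrelevant atom is also syntactically eliminable. Arranging this unfolding consistently with the global case-split on $V$, and verifying that the resulting combination of branches still witnesses the original minimal entailment $\Gamma \vDash_M \Delta$, is where the technical weight of the argument sits.
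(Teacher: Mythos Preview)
The paper does not prove this theorem at all: it is quoted as Theorem~8 of Olivetti~(1992) and used as a black box, with no argument supplied. So there is no ``paper's own proof'' to compare against; your proposal is an independent attempt at something the paper deliberately outsources.

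As to the proposal itself: the soundness direction is fine. Structural induction on \MLK-derivations is the right tool, and your treatment of the axiom $(\vdash_M)$ via the positive-occurrence monotonicity lemma is the standard argument. One caution: the rule $(\bullet\vdash_M)$ is more delicate than the others, because a minimal model of $\Gamma\cup\Sigma$ need not be a minimal model of $\Gamma$ in general; you should check carefully that the first premise $\Gamma\vdash_M\Sigma$ actually buys you what you need here (and be explicit about whether $\Sigma$ is a single formula or a set, since the disjunctive reading on the right versus the conjunctive reading on the left matters).

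For completeness, your overall architecture (case-split over assignments to $\VAR(\Gamma\cup\Delta)$, handle non-models and minimal models classically, isolate the non-minimal case) matches Olivetti's own argument. But the part you flag as ``the main obstacle'' is genuinely the whole difficulty, and your resolution is too vague to count as a proof. Saying you will ``unfold connectives via $(\vee\vdash_M)$ and its companion rules'' until the offending atom becomes syntactically absent is not yet an argument: you need to show that such an unfolding terminates, that it can be organised so that in every resulting branch the antecedent really has no positive occurrence of some atom in $P\setminus P'$, and that the branches recombine to the original sequent. Olivetti handles this by first reducing to a normal form (essentially clausal on the antecedent side) where positive occurrences are transparent, and then invoking the axiom; without that reduction lemma, your completeness sketch has a real gap at exactly the point you identify.
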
 

In addition to the sequent calculus \MLK, Olivetti developed a tableau calculus for minimal entailment \cite{Oli92}. Here we will refer to this calculus as \OTAB. A tableau is a rooted tree where nodes are labelled with formulae. 
\setlength{\tabcolsep}{2pt}
\begin{figure}[h]
{%\small
\begin{tabular}{| l | l | l |}
\hline
$\alpha$&$\alpha_1$&$\alpha_2$\\
\hline
$T(A\wedge B)$&$TA$&$TB$\\
$F\neg (A\wedge B)$&$F\neg A$&$F\neg B$\\
$T\neg (A\vee B)$&$T\neg A$&$T\neg B$\\
$F(A\vee B)$&$FA$&$FB$\\
$T\neg (A\rightarrow B)$&$T A$&$T\neg B$\\
$F(A\rightarrow B)$&$F\neg A$&$FB$\\
$T\neg \neg A$&$T A$&$T A$\\
$F\neg \neg A$&$F A$&$F A$\\
\hline
\end{tabular}
\quad
\begin{tabular}{| l | l | l |}
\hline
$\beta$&$\beta_1$&$\beta_2$\\
\hline
$T(A\vee B)$&$TA$&$TB$\\
$F\neg (A\vee B)$&$F\neg A$&$F\neg B$\\
$T\neg (A\wedge B)$&$T\neg A$&$T\neg B$\\
$F(A\wedge B)$&$FA$&$FB$\\
$T(A\rightarrow B)$&$T\neg A$&$T B$\\
$F\neg (A\rightarrow B)$&$F A$&$F\neg B$\\
\hline
\end{tabular}
}
\caption{Classification of signed formulae into $\alpha$ and $\beta$-type by sign and top-most connective \label{fig_contype}}
\end{figure}
In \OTAB, the nodes are labelled with formulae that are signed with the symbol $T$ or $F$. The combination of the sign and the top-most connective allows us to classify signed formulas  into $\alpha$ or $\beta$-type formulae as detailed in Figure~\ref{fig_contype}. Intuitively, for an $\alpha$-type formula, a branch in the tableau is augmented by $\alpha_1, \alpha_2$, whereas for a $\beta$-type formula it splits according to $\beta_1, \beta_2$.  Nodes in the tableau can be either marked or unmarked. For a sequent $\Gamma\vdash_M \Delta$, an \OTAB tableau is constructed by the following process. We start from an initial tableau consisting of a single branch of unmarked formulae, which are exactly all formulae $\gamma\in\Gamma$, signed as $T\gamma$, and all formulae $\delta\in\Delta$, signed as $F\delta$. For a tableau and a branch $\mathcal{B}$ in this tableau we can extend the tableau by two rules:

\begin{itemize}
\item[(A)] If formula $\phi$ is an unmarked node in $\mathcal{B}$ of type $\alpha$, then mark $\phi$ and add the two unmarked nodes $\alpha_1$ and $\alpha_2$ to the branch. 
\item[(B)] If formula $\phi$ is an unmarked node in $\mathcal{B}$ of type $\beta$, then mark $\phi$ and split $\mathcal{B}$ into two branches $\mathcal{B}_1,\mathcal{B}_2$ with unmarked $\beta_1\in\mathcal{B}_1$ and unmarked $\beta_2\in\mathcal{B}_2$.
\end{itemize}

A branch $\mathcal{B}$ is \emph{completed} if and only if all unmarked formulae on the branch are literals.
A branch $\mathcal{B}$ is \emph{closed} if and only if it satisfies at least one of the following conditions:

\begin{enumerate}
\item For some formula $A$, $TA$ and $T\neg A$ are nodes of $\mathcal{B}$ ($T$-closed).
\item For some formula $A$, $FA$ and $F\neg A$ are nodes of $\mathcal{B}$ ($F$-closed).
\item For some formula $A$, $TA$ and $FA$ are nodes of $\mathcal{B}$ ($TF$-closed).
\end{enumerate}

For branch $\mathcal{B}$ let $\mathrm{At}(\mathcal{B})=\{p:p $ is an atom and $Tp$ is a node in $\mathcal{B} \}$.
We define two types of \emph{ignorable branches:}
\begin{enumerate}
\item  $\mathcal{B}$ is an \emph{ignorable type-1 branch} if  $\mathcal{B}$ is completed and there is an atom $a$ such that $F\neg a$ is a node in $\mathcal{B}$, but $T a$ does not appear in $\mathcal{B}$.
\item  $\mathcal{B}$ is an \emph{ignorable type-2 branch} if there is another branch $\mathcal{B}'$ in the tableau that is completed but not $T$-closed, such that $\mathrm{At}(\mathcal{B}')\subset\mathrm{At}(\mathcal{B})$.
\end{enumerate}
%We write $\Gamma\vdash_{\OTAB} \Delta$ when for $\Gamma\vdash_M \Delta$ there is an Olivetti tableau for which every branch is closed or ignorable.

\begin{theorem}\textbf{(Theorem 2 in \cite{Oli92})} The  sequent $\Gamma\vdash_M \Delta$ is true if and only if there is an \OTAB tableau in which every branch is closed or ignorable.
\end{theorem}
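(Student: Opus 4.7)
The plan is to prove the biconditional by a standard soundness-and-completeness argument, interpreting branches of an \OTAB tableau as partial truth specifications and relating completed non-$T$-closed branches to candidate minimal models of $\bigwedge\Gamma$ via the atom-set $\mathrm{At}(\mathcal{B})$.

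For the if-direction (soundness of the tableau rules) I would argue by contraposition: assume $M$ is a minimal model of $\bigwedge\Gamma$ with $M\not\models\bigvee\Delta$, and exhibit, in any given tableau all of whose branches are closed or ignorable, a contradiction. Build the canonical branch $\mathcal{B}_M$ by descending, at each $\beta$-split, into the child whose signed formula is true in $M$; a routine induction over the tableau shows that $M$ is consistent with every signed formula on $\mathcal{B}_M$, ruling out $T$-, $F$- and $TF$-closure. Type-2 ignorability would give another completed, non-$T$-closed branch $\mathcal{B}'$ with $\mathrm{At}(\mathcal{B}')\subsetneq\mathrm{At}(\mathcal{B}_M)\subseteq M$; by the standard analytic tableau semantics, $\mathrm{At}(\mathcal{B}')$ — after adjoining any atoms demanded by $F\neg a$ literals on $\mathcal{B}'$ — yields a model of $\bigwedge\Gamma$ strictly smaller than $M$, contradicting minimality. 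Type-1 ignorability, witnessed by $a$ with $F\neg a\in\mathcal{B}_M$ but $Ta\notin\mathcal{B}_M$, gives $a\in M$ even though no positive decomposition along $\mathcal{B}_M$ uses the truth of $a$; one then shows that $M\setminus\{a\}$ remains a model of $\bigwedge\Gamma$ — the $\alpha$- and $\beta$-decompositions of every $T\gamma$ never introduce $Ta$ along $\mathcal{B}_M$, so flipping $a$ to false preserves their satisfaction — again contradicting the minimality of $M$.

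For the only-if direction (completeness of the tableau rules) I would fix any maximally expanded tableau, so that every branch is completed, and prove the contrapositive: if some completed branch $\mathcal{B}$ is neither closed nor ignorable, then $\mathrm{At}(\mathcal{B})$ is a minimal model of $\bigwedge\Gamma$ falsifying $\bigvee\Delta$, violating $\Gamma\vDash_M\bigvee\Delta$. Non-type-1-ignorability ensures that every $F\neg a\in\mathcal{B}$ is matched by $Ta\in\mathcal{B}$, so $\mathrm{At}(\mathcal{B})$ respects every literal constraint of the branch; a standard tableau-semantics induction then yields $\mathrm{At}(\mathcal{B})\models\bigwedge\Gamma$ and $\mathrm{At}(\mathcal{B})\not\models\bigvee\Delta$. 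Non-type-2-ignorability forces $\mathrm{At}(\mathcal{B})$ to be subset-minimal among the atom-sets of completed non-$T$-closed branches, and this combinatorial minimality is upgraded to genuine semantic minimality by arguing that any strictly smaller hypothetical model $N\subsetneq\mathrm{At}(\mathcal{B})$ would, via the $\beta$-choices it dictates, yield another completed non-$T$-closed branch of the same tableau with atom-set $N$.

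The main obstacle throughout is the interaction between atoms \emph{demanded} by $F\neg a$ and atoms \emph{forced} by $Ta$: both impose that $a$ is true, but only the latter contributes to $\mathrm{At}(\mathcal{B})$, and the type-1 ignorable condition is precisely the bookkeeping needed to rule out branches whose $\mathrm{At}$-set under-represents the true positive demands of the branch. The central technical claim — required in both directions — is that for a minimal model $M$ of $\bigwedge\Gamma$ the canonical branch $\mathcal{B}_M$ satisfies $\mathrm{At}(\mathcal{B}_M)=M$, i.e.\ every atom in $M$ is positively used by the decomposition along $\mathcal{B}_M$. Establishing this equality needs a careful syntactic lemma about how $Ta$ can be introduced by the $\alpha$/$\beta$ rules, and is where essentially all of the real work of the proof lies.
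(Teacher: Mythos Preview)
The paper does not prove this statement at all: it is quoted as Theorem~2 of Olivetti's 1992 paper and used purely as a black box for the soundness and completeness of \OTAB. There is therefore no in-paper argument to compare your proposal against.

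Your outline is nonetheless a sensible reconstruction, but you have made it harder than it needs to be, and in one place the phrasing is actually wrong. In the type-1 case of soundness you assert that $M\setminus\{a\}$ remains a model of $\bigwedge\Gamma$; this is neither what follows from your hypotheses nor what you need. The correct observation is that $\mathrm{At}(\mathcal{B}_M)$ itself models $\bigwedge\Gamma$ --- the branch is completed and not $T$-closed, so exactly the argument of Lemma~\ref{thm_atsat} applies --- and since $\mathrm{At}(\mathcal{B}_M)\subseteq M$ with $a\in M\setminus\mathrm{At}(\mathcal{B}_M)$, minimality of $M$ is violated directly. The same remark applies to your type-2 case: there is no need to ``adjoin atoms demanded by $F\neg a$ literals on $\mathcal{B}'$''; $\mathrm{At}(\mathcal{B}')$ already models $\Gamma$ and is strictly below $M$.

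This also shows that your ``central technical claim'' $\mathrm{At}(\mathcal{B}_M)=M$ is a red herring. You only ever use the trivial containment $\mathrm{At}(\mathcal{B}_M)\subseteq M$ together with the fact that $\mathrm{At}(\mathcal{B})\models\Gamma$ for any completed non-$T$-closed branch $\mathcal{B}$; the reverse containment, which you flag as ``where essentially all of the real work lies'', is never needed. In the completeness direction the same simplification applies: for a hypothetical $N\subsetneq\mathrm{At}(\mathcal{B})$ the branch $\mathcal{B}_N$ in a fully expanded tableau is automatically completed and not $T$-closed, and $\mathrm{At}(\mathcal{B}_N)\subseteq N\subsetneq\mathrm{At}(\mathcal{B})$ already makes $\mathcal{B}$ type-2 ignorable, without requiring $\mathrm{At}(\mathcal{B}_N)=N$.
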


\section{Simulating \OTAB by \MLK}
\label{sec:simulation-OTAB-MLK}

We will work towards a simulation of the tableau system \OTAB by the sequent system \MLK. In preparation for this a few lemmas are needed. We also add more information to the nodes (this can all be done in polynomial time).
We start with a fact about \LK (for a proof see \cite{BC14-ECCC}).

\begin{lemma}\label{thm_shortLK}
 For sets of formulae $\Gamma, \Delta$  and disjoints sets of atoms  $\Sigma^{+}, \Sigma^{-}$ with $\VAR(\Gamma\cup \Delta)=\Sigma^{+}\sqcup \Sigma^{-}$ we can efficiently construct polynomial-size \LK-proofs of $\Sigma^{+}, \neg \Sigma^{-}, \Gamma\vdash \Delta$ when the sequent is true.
\end{lemma}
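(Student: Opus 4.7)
My approach is a routine structural induction on formula complexity, via an intermediate claim that localises the work to a single formula at a time. Specifically, I would prove first that for any single formula $\phi$ built from atoms in $\Sigma^{+}\sqcup\Sigma^{-}$, one of the following polynomial-size \LK-derivations is efficiently constructible: a proof of $\Sigma^{+},\neg\Sigma^{-}\vdash\phi$ when $\phi$ evaluates to true under the assignment that makes $\Sigma^{+}$ true and $\Sigma^{-}$ false, or a proof of $\Sigma^{+},\neg\Sigma^{-},\phi\vdash$ when $\phi$ evaluates to false.

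Given this claim, the lemma is immediate. Since $\Sigma^{+}\cup\neg\Sigma^{-}$ pins down a total truth assignment to $\VAR(\Gamma\cup\Delta)$ and the sequent $\Sigma^{+},\neg\Sigma^{-},\Gamma\vdash\Delta$ is true, either some $\gamma\in\Gamma$ must evaluate to false or some $\delta\in\Delta$ must evaluate to true. In the first case I take the derivation of $\Sigma^{+},\neg\Sigma^{-},\gamma\vdash$ supplied by the claim and extend it with the weakening rules $(\bullet\vdash)$ and $(\vdash\bullet)$ to insert the remaining formulas of $\Gamma$ and all of $\Delta$; the second case is symmetric.

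The inductive proof of the claim itself splits on the shape of $\phi$. Atomic base cases are handled by the axiom $p\vdash p$ together with weakening, after first applying $(\neg\vdash)$ when $p\in\Sigma^{-}$; the $\top$ and $\bot$ cases are trivial. For a compound $\phi$ I case on the outermost connective and the truth value. For instance, if $\phi=A\wedge B$ is true I combine the IH-derivations of $\Sigma^{+},\neg\Sigma^{-}\vdash A$ and $\Sigma^{+},\neg\Sigma^{-}\vdash B$ by $(\vdash\wedge)$; if $\phi=A\wedge B$ is false I pick a false conjunct and apply $(\wedge\bullet\vdash)$ or $(\bullet\wedge\vdash)$ to the IH-derivation of the corresponding left-sided sequent. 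Disjunction uses $(\vee\vdash)$ or one of $(\vdash\bullet\vee),(\vdash\vee\bullet)$; negation uses $(\vdash\neg)$ or $(\neg\vdash)$; implication uses $(\vdash\rightarrow)$ when true, and when false the two conjuncts $A$ true and $B$ false combined via $(\rightarrow\vdash)$ with the IH-derivations.

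Each step of the induction adds a constant number of rule applications, and each subformula is processed at most once, so the resulting derivation has size linear in $|\phi|+|\Sigma^{+}|+|\Sigma^{-}|$; summing over the formulas in $\Gamma\cup\Delta$ yields the claimed polynomial bound, and the construction is plainly polynomial-time. There is no real obstacle in the argument; the only care needed is in the bookkeeping of weakening and in selecting the correct structural rule in each inductive case.
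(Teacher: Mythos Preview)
Your argument is correct and is the natural route: under the total assignment determined by $\Sigma^{+},\Sigma^{-}$, show by structural induction on a single formula $\phi$ that one of $\Sigma^{+},\neg\Sigma^{-}\vdash\phi$ or $\Sigma^{+},\neg\Sigma^{-},\phi\vdash$ has a short \LK derivation, then pick a falsified $\gamma\in\Gamma$ or a satisfied $\delta\in\Delta$ and weaken. The only imprecision is in the size accounting: the number of rule applications for a single $\phi$ is indeed linear in $|\phi|$, but since every intermediate sequent carries $\Sigma^{+}\cup\neg\Sigma^{-}$, the total symbol count is quadratic rather than linear. This is harmless for the polynomial bound the lemma asserts.

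As for comparison with the paper: the paper does not actually prove this lemma in the text; it states it as a fact about \LK and defers the proof to the companion report \cite{BC14-ECCC}. So there is no in-paper argument to compare against, and your self-contained proof is a perfectly acceptable substitute.
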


We also need to derive a way of weakening in \MLK, and we show this in the next lemma.

\begin{lemma}\label{thm_MLK_weak}
From a sequent $\Gamma\vdash_M \Delta$ with  non-empty $\Delta$ we can derive $\Gamma\vdash_M \Delta, \Sigma$ in a polynomial-size \MLK-proof for any set of formulae $\Sigma$.
\end{lemma}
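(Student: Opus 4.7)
The lemma is an instance of right-weakening for \MLK, which is not a primitive rule of the system. The plan is to simulate it by a single application of (M-cut), using the non-emptiness of $\Delta$ to supply a cut formula, together with a short \LK-derivation that is imported into \MLK via the rule $(\vdash\vdash_M)$.

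In detail, I would proceed as follows. Assume $\Sigma \neq \emptyset$ (otherwise the claim is immediate). Pick any $A \in \Delta$ and write $\Delta = \Delta' \cup \{A\}$, so that the given \MLK-proof establishes $\Gamma \vdash_M \Delta', A$. Separately, build a short \LK-derivation of $A, \Gamma \vdash A, \Sigma$: start from the axiom $A \vdash A$, add all of $\Gamma$ on the left by one application of $(\bullet \vdash)$, and add all of $\Sigma$ on the right by one application of $(\vdash \bullet)$. Convert this to an \MLK-sequent by a single use of $(\vdash\vdash_M)$, yielding $A, \Gamma \vdash_M A, \Sigma$. Then apply (M-cut) on the formula $A$ to the two sequents $\Gamma \vdash_M \Delta', A$ and $A, \Gamma \vdash_M A, \Sigma$; the conclusion is $\Gamma \vdash_M \Delta', A, \Sigma$, which as a set is exactly $\Gamma \vdash_M \Delta, \Sigma$, as required.

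The derivation adds only a constant number of lines on top of the given proof (one axiom, two structural inferences, one $(\vdash\vdash_M)$, one (M-cut)), each of size $O(|\Gamma| + |\Sigma|)$ in the formulae it carries, so the construction is polynomial and can be produced in polynomial time from the input. There is no genuine obstacle here; the only subtlety is noticing that (M-cut) together with $(\vdash\vdash_M)$ acts as a substitute for the missing right-weakening rule, which explains why the hypothesis $\Delta \neq \emptyset$ is needed --- it is precisely what guarantees the availability of a formula $A$ to cut on.
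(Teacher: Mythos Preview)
Your proof is correct and follows essentially the same approach as the paper: pick a formula from $\Delta$, derive the weakened sequent classically from an axiom and \LK-weakening, lift it via $(\vdash\vdash_M)$, and then use (M-cut) against the given sequent. The only cosmetic difference is that the paper weakens the right-hand side to all of $\Delta,\Sigma$ rather than just $A,\Sigma$, which makes no difference to the argument.
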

\begin{proof}
We take $\delta\in\Delta$, and from the $\LK$-axiom we get $\delta\vdash\delta$. From weakening in \LK we obtain $\Gamma,\delta\vdash \Delta, \Sigma$. Using rule ($\vdash \vdash_M$) we obtain  $\Gamma,\delta\vdash_M \Delta, \Sigma$.  We then derive $\Gamma\vdash_M \Delta, \Sigma$ using the ($M$-cut) rule.
\qed
\end{proof}

The proof makes essential use of the (M-cut) rule. As a result \MLK is not complete without (M-cut); e.g.\ the sequent $\emptyset \vdash_M \neg a, \neg b$ cannot be derived. A discussion on cut elimination in \MLK is given in \cite{Oli92}.

%Note that these proofs can be efficiently constructed in polynomial size.

\begin{lemma}\label{thm_abshort}

Let $T\tau$ be an $\alpha$-type formula with $\alpha_1=T\tau_1$, $\alpha_2=T\tau_2$, and let $F\psi$ be an $\alpha$-type formula with $\alpha_1=F\psi_1$, $\alpha_2=F\psi_2$. Similarly, let $T\phi$ be a $\beta$-type formula with $\beta_1=T\phi_1$, $\beta_2=T\phi_2$, and let $F\chi$ be an $\beta$-type formula with $\beta_1=F\chi_1$, $\beta_2=F\chi_2$.

The following sequents all can be proved with polynomial-size \LK-proofs:
%\begin{itemize}
$\tau \vdash \tau_1\wedge \tau_2$,
$ \tau_1\wedge\tau_2 \vdash \tau$,
 $\psi \vdash \psi_1\vee\psi_2$,
 $\psi_1\vee\psi_2 \vdash \psi$,
 $\phi \vdash \phi_1\vee\phi_2$,
 $\phi_1\vee\phi_2 \vdash \phi$,
 $\chi \vdash \chi_1\wedge\chi_2$, and 
 $ \chi_1\wedge\chi_2 \vdash \chi$.
%\end{itemize}

\end{lemma}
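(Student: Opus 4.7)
My plan is a direct case analysis on which row of Figure~\ref{fig_contype} determines each signed formula, producing a constant-size \LK-derivation per row. Once the row is fixed---say $\tau = \neg(A\vee B)$, which forces $\tau_1 = \neg A$ and $\tau_2 = \neg B$---the required sequents $\tau\vdash\tau_1\wedge\tau_2$ and $\tau_1\wedge\tau_2\vdash\tau$ are elementary propositional tautologies whose \LK-proofs rest on nothing more than the axiom instances $A\vdash A$ and $B\vdash B$ together with a bounded number of propositional rule applications.

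Concretely, I would fix one template derivation per row. For the sample row above, $\neg(A\vee B)\vdash \neg A\wedge\neg B$ arises from $A\vdash A$ via $(\vdash\vee\bullet)$ to get $A\vdash A\vee B$, then $(\neg\vdash)$ and $(\vdash\neg)$ to reach $\neg(A\vee B)\vdash\neg A$; symmetrically starting from $B\vdash B$ yields $\neg(A\vee B)\vdash\neg B$, after which a single $(\vdash\wedge)$ finishes. The reverse sequent $\neg A\wedge\neg B\vdash\neg(A\vee B)$ uses $(\neg\vdash)$ on $A\vdash A$ and $B\vdash B$, then $(\bullet\wedge\vdash)$ and $(\wedge\bullet\vdash)$ to bring $\neg A\wedge\neg B$ into the antecedent, a $(\vee\vdash)$ to join the two branches, and finally $(\vdash\neg)$. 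The remaining $\alpha$-rows for $\tau$ and $\psi$ follow the same recipe, as do the $\beta$-rows for $\phi$ and $\chi$ with the roles of $\wedge$ and $\vee$ interchanged. The two $\neg\neg$ rows degenerate to short derivations such as $\neg\neg A\vdash A\wedge A$, obtained from $A\vdash A$ by $(\vdash\neg)$, $(\neg\vdash)$, and a branching $(\vdash\wedge)$ with two identical premises.

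Because each template uses only $A\vdash A$ and $B\vdash B$ as leaves and a bounded number of rule applications above them, every resulting \LK-proof has size $O(1)$ independently of the complexity of $A$ and $B$, and can be emitted in polynomial time by a table lookup on the top-level structure of the signed formula. I do not foresee any genuine technical obstacle here: the lemma is in essence a collection of per-row tautology checks bundled together, and the only real risk is a bookkeeping slip in choosing the correct template or in matching the rule names from Figure~\ref{fig_LK} across the fourteen applicable rows of Figure~\ref{fig_contype}.
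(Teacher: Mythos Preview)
Your proposal is correct and matches the paper's approach exactly: the paper itself states only that ``the straightforward proof of this involves checking all cases, which we omit here,'' and your row-by-row template derivations are precisely that case analysis. One small wording point: each template has $O(1)$ many inference steps, but the total proof size is linear in $|A|+|B|$ since the formulas occur in every line---still well within the polynomial bound claimed.
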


The straightforward proof of this involves checking all cases, which we omit here.

We now annotate the nodes $u$ in an \OTAB tableau with three sets of formulae $A_u$, $B_u$, $C_u$ and a set of branches $D_u$. This information will later be used to construct sequents $A_u\vdash_M B_u, C_u$, which will form the skeleton of the eventual \MLK proof that simulates the \OTAB tableau. Intuitively, if we imagine following a branch when constructing the tableau, $A_u$ corresponds to the current unmarked $T$-formulae on the branch, while $B_u$ corresponds to the current unmarked $F$-formulae. $C_u$ contains global information on all the branches that minimise the ignorable type-2 branches in the subtree with root $u$. 
%The formulae $A_u$ and $B_u$ are constructed similarly, while $C_u$ requires more work and uses the sets $D_v$ for nodes $v$. 
The formal definition follows. We start with the definition of the formulae $A_u$ and $B_u$, which proceeds by induction on the construction of the tableau.

\begin{definition}\label{def_ab}
Nodes $u$ in the \OTAB tableau from the initial tableau are annotated with $A_u=\Gamma$ and $B_u=\Delta$. 

For the inductive step, consider the case that the extension rule (A) was used on node $u$ for the $\alpha$-type signed formula $\phi$. If $\phi=T\chi$ has $\alpha_1=T\chi_1$, $\alpha_2=T\chi_2$ then for the node $v$ labelled $\alpha_1$ and the node $w$ labelled $\alpha_2$,  $A_v=A_w=(\{\chi_1,\chi_2\}\cup A_u)\setminus\{\chi\}$ and $B_u=B_v=B_w$. If $\phi=F\chi$ has $\alpha_1=F\chi_1$, $\alpha_2=F\chi_2$ then for the node $v$ labelled $\alpha_1$ and the node $w$ labelled $\alpha_2$,  $A_u=A_v=A_w$ and $B_v=B_w=(\{\chi_1,\chi_2\}\cup B_u)\setminus\{\chi\}$.

Consider now the case that the branching rule (B) was used on node $u$ for the $\beta$-type signed formula $\phi$. If $\phi=T\chi$ has $\beta_1=T\chi_1$, $\beta_2=T\chi_2$ then for the node $v$ labelled $\beta_1$ and the node $w$ labelled $\beta_2$, $A_v=(\{\chi_1\}\cup A_u)\setminus\{\chi\}, A_w=(\{\chi_2\}\cup A_u)\setminus\{\chi\}$ and $ B_v=B_w=B_u$. If $\phi=F\chi$ has $\beta_1=F\chi_1$, $\beta_2=F\chi_2$ then for the node $v$ labelled $\beta_1$ and the node $w$ labelled $\beta_2$, $B_v=(\{\chi_1\}\cup B_u)\setminus\{\chi\}, B_w=(\{\chi_2\}\cup B_u)\setminus\{\chi\}$ and $A_v=A_w=A_u$.
\end{definition}

For each ignorable type-2 branch $\mathcal{B}$ we can find another branch $\mathcal{B'}$,  which is not ignorable type-2 and such that $\mathrm{At}(\mathcal{B}')\subset\mathrm{At}(\mathcal{B})$. The definition of ignorable type-2 might just refer to another ignorable type-2 branch, but eventually --- since the tableau is finite --- we reach a branch $\mathcal{B'}$,  which is not ignorable type-2.  There could be several such branches, and we will denote the left-most such branch $\mathcal{B'}$ as $\theta(\mathcal{B})$.

We are now going to construct sets $C_u$ and $D_u$. The set $D_u$ contains some information on type-2 ignorable branches. Let $u$ be a node, which is the root of a sub-tableau $T$, and consider the set $I$ of all type-2 ignorable branches that go through $T$. Now intuitively, $D_u$ is defined as the set of all branches from $\theta(I)$ that are outside of $T$.
The set $C_u$ is then defined from $D_u$ as $C_u=\{\bigwedge_{p\in\mathrm{At}(\theta(\mathcal{B}))} p \mid \mathcal{B} \in D_u\}$. The formal constructions of $C_u$ and $D_u$ are below. Unlike $A_u$ and $B_u$, which are constructed inductively from the root of the tableau, the sets $C_u$ and $D_u$ are constructed inductively from the leaves to the root, by reversing the branching procedure.  

\begin{definition}\label{def_cd}
 For an ignorable type-2 branch $\mathcal{B}$ the end node $u$ is annotated by the singleton sets $C_u=\{\bigwedge_{p\in\mathrm{At}(\mathcal{\theta(B)})} p \}$ and $D_u=\{\theta(\mathcal{B})\}$; for other leaves $C_u=D_u=\emptyset$.

Inductively, we define:
\begin{itemize}

\item For a node $u$ with only one child $v$, we set $D_u=D_v$ and $C_u=C_v$.

\item For a node $u$ with two children $v$ and $w$, we set $D_u=(D_v\setminus\{\mathcal{B}\mid w\in\mathcal{B}\})\cup(D_w\setminus\{\mathcal{B} \mid v\in\mathcal{B}\})$ and $C_u=\{\bigwedge_{p\in\mathrm{At}(\theta(\mathcal{B}))} p \mid \mathcal{B} \in D_u\}$.
\end{itemize}

For each binary node $u$ with children $v$, $w$ we specify two extra sets. We set $E_u=(D_v\cup D_w) \setminus D_u$, and from this we can construct the set of formulae $F_u=\{\bigwedge_{p\in\mathrm{At}(\mathcal{B})} p \mid \mathcal{B} \in E_u\}$. We let $\omega=\bigvee F_u$.
\end{definition}

We now prepare the simulation result with a couple of lemmas.

\begin{lemma}\label{thm_leaf2}
Let $\mathcal{B}$ be a branch in an \OTAB tableau ending in leaf $u$. Then $A_u$ is the set of all unmarked $T$-formulae on $\mathcal{B}$ (with the sign $T$ removed). Likewise $B_u$ is the set of all unmarked $F$-formulae on $\mathcal{B}$ (with the sign $F$ removed). 
\end{lemma}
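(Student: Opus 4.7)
The plan is to prove both statements simultaneously by induction on the height of $u$ in the tableau, following the inductive construction of $A_u$ and $B_u$ given in Definition~\ref{def_ab}. The lemma is really just a consistency check that the bookkeeping in that definition faithfully mirrors the marking/unmarking dynamics of the \OTAB construction.

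For the base case, I would consider the initial tableau: it consists of the single branch of unmarked signed formulae $\{T\gamma\mid\gamma\in\Gamma\}\cup\{F\delta\mid\delta\in\Delta\}$, and by Definition~\ref{def_ab} the annotation at any node $u$ of this initial branch is $A_u=\Gamma$, $B_u=\Delta$. Thus $A_u$ is exactly the set of unmarked $T$-formulae (with sign stripped), and similarly for $B_u$.

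For the inductive step, I would assume the statement holds at $u$ and verify it at each child produced by an application of rule (A) or rule (B). There are four cases depending on whether the signed formula $\phi$ expanded at $u$ has sign $T$ or $F$ and is of $\alpha$- or $\beta$-type; I would treat one case in detail and note that the others are symmetric. Take rule (A) applied to an $\alpha$-type $\phi = T\chi$ with $\alpha_1 = T\chi_1$, $\alpha_2 = T\chi_2$. By the \OTAB rule, $\chi$ becomes marked and $\chi_1,\chi_2$ are appended to the same branch as fresh unmarked $T$-formulae, while no $F$-formula is touched. Hence, using the inductive hypothesis that $A_u$ is the set of currently unmarked $T$-formulae, the set of unmarked $T$-formulae at the child $v$ (and at $w$) is exactly $(A_u \setminus\{\chi\}) \cup \{\chi_1,\chi_2\}$, which matches the definition $A_v = A_w = (\{\chi_1,\chi_2\}\cup A_u)\setminus\{\chi\}$; and the unmarked $F$-set is unchanged, matching $B_v = B_w = B_u$. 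The case $\phi = F\chi$ of $\alpha$-type is dual, acting on $B$ and leaving $A$ fixed. For rule (B) on a $\beta$-type $T\chi$, the branch splits and the unmarked $T$-set on the branch through $v$ receives only $\chi_1$ (with $\chi$ marked), and that on the branch through $w$ only $\chi_2$, matching the definitions of $A_v$ and $A_w$; the $F$-side is unchanged. The $F$-signed $\beta$-type case is again symmetric.

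No step here is really hard; the only thing to be careful about is that set operations in Definition~\ref{def_ab} correctly express the branch dynamics, in particular that deleting $\chi$ before adding $\chi_1,\chi_2$ is consistent with the fact that after marking $\chi$ ceases to be an unmarked $T$-formula. Since the \OTAB rules only ever affect the unmarked literal content of the single branch being extended, the induction pushes through unchanged along that branch, and the lemma follows at every leaf $u$.
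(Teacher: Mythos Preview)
Your proposal is correct and is essentially the same argument as the paper's, just organised as an explicit node-by-node induction along the branch rather than the paper's slightly terser formula-by-formula tracking (``an unmarked $T\phi$ was added at some $A_v$ and never removed, a marked one was removed''). One small remark: what you call ``height of $u$'' is really the depth (distance from the root), since your base case is the initial tableau and your step passes from a node to its children; and note that the definition writes $(\{\chi_1,\chi_2\}\cup A_u)\setminus\{\chi\}$ whereas you write $(A_u\setminus\{\chi\})\cup\{\chi_1,\chi_2\}$, but these coincide because $\chi$ is always a strictly larger formula than $\chi_1,\chi_2$ in the $\alpha/\beta$ tables.
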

\begin{proof}
We will verify this for $T$-formulae, the argument is the same for $F$-formulae.
If $T\phi$ at node $v$ is an unmarked formula on branch $\mathcal{B}$ then $\phi$ has been added to $A_v$, regardless of which extension rule is used and cannot be removed at any node unless it is marked. Therefore, if $u$ is the leaf of the branch, we have $\phi\in A_u$. If $T\phi$ is marked then it is removed (in the inductive step in the construction in Definition~\ref{def_ab}) and is not present in $A_u$. $F$-formulae do not appear in $A_u$.
\qed
\end{proof}

\begin{lemma}\label{thm_leaf1} 
Let $\mathcal{B}$ be a branch in an \OTAB tableau.
\begin{enumerate}
\item Assume that $T\phi$ appears on the branch $\mathcal{B}$, and let $A(\mathcal{B})$ be the set of unmarked $T$-formulae on $\mathcal{B}$ (with the sign $T$ removed). Then $A(\mathcal{B})\vdash \phi$ can be derived in a polynomial-size \LK-proof.
\item Assume that $F(\phi)$ appears on the branch $\mathcal{B}$, and let $B(\mathcal{B})$ be the set of unmarked $F$-formulae on $\mathcal{B}$ (with the sign $F$ removed). Then $\phi\vdash B(\mathcal{B})$ can be derived in a polynomial-size \LK-proof.
\end{enumerate}
\end{lemma}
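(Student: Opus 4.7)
The plan is to prove parts~(1) and~(2) by separate inductions on the size $|\phi|$; the two are handled symmetrically so I focus on part~(1). Inspection of Figure~\ref{fig_contype} yields two useful facts: whenever $T\phi$ is of $\alpha$- or $\beta$-type, its decompositions $\alpha_1,\alpha_2$ or $\beta_1,\beta_2$ are themselves $T$-signed formulae, and in every row the formulae beneath a signed formula are strictly smaller than the one above. The first fact means the induction for part~(1) does not need to invoke part~(2), and the second ensures the induction on $|\phi|$ is well-founded.

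For the base case, suppose $T\phi$ is unmarked on $\mathcal{B}$. Then $\phi\in A(\mathcal{B})$ by definition, so the \LK-axiom $\phi\vdash\phi$ followed by weakening via $(\bullet\vdash)$ yields $A(\mathcal{B})\vdash\phi$ of size $O(|A(\mathcal{B})|)$. For the inductive step I distinguish two cases according to the type of $T\phi$. If $T\phi$ is $\alpha$-type with $\alpha_1=T\phi_1$, $\alpha_2=T\phi_2$, then both $T\phi_1$ and $T\phi_2$ appear on $\mathcal{B}$ by the construction of extension rule~(A). The induction hypothesis supplies polynomial-size \LK-proofs of $A(\mathcal{B})\vdash\phi_1$ and $A(\mathcal{B})\vdash\phi_2$, which I combine via $(\vdash\wedge)$ into $A(\mathcal{B})\vdash\phi_1\wedge\phi_2$. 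Lemma~\ref{thm_abshort} provides a polynomial-size \LK-proof of $\phi_1\wedge\phi_2\vdash\phi$, and an application of (cut) closes off $A(\mathcal{B})\vdash\phi$. If $T\phi$ is $\beta$-type with $\beta_1=T\phi_1$, $\beta_2=T\phi_2$, then since rule~(B) sends $T\phi_1$ and $T\phi_2$ to different branches exactly one of them lies on $\mathcal{B}$, say $T\phi_1$ (the other case is symmetric). The IH yields $A(\mathcal{B})\vdash\phi_1$, one application of $(\vdash\vee\bullet)$ gives $A(\mathcal{B})\vdash\phi_1\vee\phi_2$, and cutting against $\phi_1\vee\phi_2\vdash\phi$ from Lemma~\ref{thm_abshort} produces the desired derivation.

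Part~(2) is completely analogous, replacing $T$ by $F$ and $A(\mathcal{B})$ by $B(\mathcal{B})$: $\alpha$-type $F$-formulae have their decompositions combined with $(\vee\vdash)$ and the sequent $\phi\vdash\phi_1\vee\phi_2$ from Lemma~\ref{thm_abshort}, while $\beta$-type $F$-formulae use $(\bullet\wedge\vdash)$ (or $(\wedge\bullet\vdash)$, depending on which branch is taken) together with $\phi\vdash\phi_1\wedge\phi_2$. I do not anticipate any significant obstacle: the size-decrease in every row of Figure~\ref{fig_contype} is immediate by inspection, and the polynomial size bound follows because the recursion tree is exactly the part of the tableau-decomposition tree of $T\phi$ (resp.\ $F\phi$) lying along~$\mathcal{B}$, so it has at most $|\mathcal{B}|$ nodes, each contributing only $O(|\phi|+|A(\mathcal{B})|)$ symbols (from Lemma~\ref{thm_abshort} and the standard \LK rules) to the overall proof.
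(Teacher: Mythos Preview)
Your proposal is correct and follows essentially the same argument as the paper: both proofs do a case analysis on whether the node $T\phi$ is unmarked (direct via axiom plus weakening) or marked (recurse on the decomposition products, combine via $(\vdash\wedge)$ or $(\vdash\vee\bullet)$, then cut against the sequents of Lemma~\ref{thm_abshort}). The only cosmetic difference is the choice of induction parameter---you use the size $|\phi|$, while the paper inducts on the number of applications of rules~(A) and~(B) at the node---and both measures strictly decrease at each step, so the structure and size analysis are identical.
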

\begin{proof}
We prove the two claims by induction on the number of branching rules (A) and extension rules (B) that have been applied on the path to the node. We start with the proof of the first item.

%\begin{enumerate}

%\item 
\textbf{Induction Hypothesis} (on the number of applications of rules (A) and (B) on the node labelled $T\phi$): For a node labelled $T\phi$ on branch $\mathcal{B}$, we can derive $A(\mathcal{B})\vdash \phi$ in a polynomial-size \LK-proof (in the size of the tableau).

\textbf{Base Case} ($T\phi$ is unmarked): The \LK  axiom $\phi\vdash\phi$ can be used and then weakening to obtain $A(\mathcal{B})\vdash \phi$.

\textbf{Inductive Step:} If $T\phi$ is a marked $\alpha$-type formula, then both $\alpha_1= T\phi_1$ and $\alpha_2=T\phi_2$ appear on the branch. By the induction hypothesis we derive $A(\mathcal{B})\vdash\phi_1$, $A(\mathcal{B})\vdash\phi_2$ in polynomial-size proofs, hence we can derive $A(\mathcal{B})\vdash\phi_1\wedge \phi_2$ in a polynomial-size proof (we are bounded in total number of proof subtrees by the numbers of nodes in our branch). We then have $\phi_1\wedge \phi_2\vdash \phi$ using Lemma~\ref{thm_abshort}. Using the cut rule we can derive $A(\mathcal{B})\vdash \phi$.

If $T\phi$ is a $\beta$-type formula and is marked, then the branch must contain $\beta_1= T\phi_1$ or $\beta_2=T\phi_2$. Without loss of generality we can assume that $\beta_1= T\phi_1$ appears on the branch. By the induction hypothesis $A(\mathcal{B})\vdash\phi_1$, therefore we can derive $A(\mathcal{B})\vdash\phi_1\vee \phi_2$ since it is a $\beta$-type formula and derive $\phi_1\vee \phi_2\vdash \phi$  with Lemma~\ref{thm_abshort}. Then using the cut rule we derive $A(\mathcal{B})\vdash \phi$.

%\item
The second item is again shown by induction.

\textbf{Induction Hypothesis} (on the number of applications of rules (A) and (B) on the node labelled $F\phi$): For a node labelled $F\phi$ on branch $\mathcal{B}$, we can derive $\phi\vdash B(\mathcal{B})$ in a polynomial-size \LK-proof (in the size of the tableau).

 \textbf{Base Case} ($F\phi$ is unmarked): The \LK axiom $\phi\vdash\phi$ can be used and then weakened to $\phi\vdash B(\mathcal{B})$.

\textbf{Inductive Step:} If $F\phi$ is a marked $\alpha$-type formula, then both $\alpha_1= F\phi_1$ and $\alpha_2=F\phi_2$ appear on the branch. Since by the inductive hypothesis $\phi_1\vdash B(\mathcal{B})$ and $\phi_2\vdash B(\mathcal{B})$, we can derive $\phi_1\vee \phi_2\vdash B(\mathcal{B})$ in a polynomial-size proof. We then have $\phi\vdash \phi_1\vee\phi_2$ using Lemma~\ref{thm_abshort}. Using the cut rule we can derive $\phi\vdash B(\mathcal{B})$.

If $F\phi$ is a $\beta$-type formula and is marked, then the branch must contain $\beta_1= F\phi_1$ or $\beta_2=F\phi_2$. Without loss of generality we can assume $\beta_1= F\phi_1$ appears on the branch. By the induction hypothesis $\phi_1\vdash B(\mathcal{B})$, therefore we can derive $\phi_1\wedge\phi_2\vdash B(\mathcal{B})$ since it is a $\beta$-type formula and derive $\phi\vdash \phi_1\wedge\phi_2$ with Lemma~\ref{thm_abshort}. Using the cut rule we derive $\phi\vdash B(\mathcal{B})$.
\qed
%\end{enumerate}
\end{proof}

\begin{lemma}\label{thm_atsat}
Let $\mathcal{B}$ be a branch, which is completed but not $T$-closed. For any node $u$ on  $\mathcal{B}$, the model $\mathrm{At}(\mathcal{B})$ satisfies $A_u$.
\end{lemma}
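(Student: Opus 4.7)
The plan is to argue by downward induction along $\mathcal{B}$, starting at the leaf $\ell$ of the branch and walking upward toward the root; the claim for an arbitrary $u$ then follows because $u$ lies on this path.

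For the base case at $\ell$, Lemma~\ref{thm_leaf2} tells us that $A_\ell$ is exactly the set of unmarked $T$-formulae on $\mathcal{B}$ (with the sign $T$ stripped), and since $\mathcal{B}$ is completed each such formula is a literal. A positive literal $p \in A_\ell$ means $Tp$ appears on $\mathcal{B}$, hence $p \in \mathrm{At}(\mathcal{B})$. A negative literal $\neg p \in A_\ell$ means $T\neg p$ appears on $\mathcal{B}$, and because $\mathcal{B}$ is not $T$-closed $Tp$ cannot also appear, so $p \notin \mathrm{At}(\mathcal{B})$ and $\neg p$ is satisfied. Thus $\mathrm{At}(\mathcal{B}) \models A_\ell$.

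For the inductive step, let $u'$ be the parent on $\mathcal{B}$ of some node $v$ for which the hypothesis already holds, and split on the rule applied at $u'$ according to Definition~\ref{def_ab}. When the expanded signed formula carries the sign $F$, we have $A_v = A_{u'}$ and the step is immediate. When the expanded signed formula is $T\chi$ of $\alpha$-type, $A_v = (A_{u'} \setminus \{\chi\}) \cup \{\chi_1,\chi_2\}$, and direct inspection of Figure~\ref{fig_contype} shows $\chi \equiv \chi_1 \wedge \chi_2$; from $\mathrm{At}(\mathcal{B}) \models \chi_1, \chi_2$ we conclude $\mathrm{At}(\mathcal{B}) \models \chi$, giving $\mathrm{At}(\mathcal{B}) \models A_{u'}$. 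When the expanded signed formula is $T\chi$ of $\beta$-type, the tableau branched at $u'$ and $A_v = (A_{u'} \setminus \{\chi\}) \cup \{\chi_i\}$ for the index $i$ corresponding to the direction $\mathcal{B}$ took; since by Figure~\ref{fig_contype} $\chi \equiv \chi_1 \vee \chi_2$, the satisfaction of $\chi_i$ again forces $\mathrm{At}(\mathcal{B}) \models \chi$, and hence $\mathrm{At}(\mathcal{B}) \models A_{u'}$.

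I do not foresee any serious obstacle. The whole argument is a routine backward induction; the only slightly delicate point is the base case, where the non-$T$-closedness of $\mathcal{B}$ is needed precisely to rule out that an atom $p$ with $T\neg p$ on the branch also has $Tp$ on the branch. The logical equivalences $\chi \equiv \chi_1 \wedge \chi_2$ and $\chi \equiv \chi_1 \vee \chi_2$ for $T$-signed $\alpha$- and $\beta$-patterns are read off directly from Figure~\ref{fig_contype} and require no more than tabulated bookkeeping.
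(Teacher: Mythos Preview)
Your proof is correct and follows essentially the same route as the paper: both argue by induction from the leaf of $\mathcal{B}$ toward the root (the paper phrases this as induction on the height of the subtree rooted at $u$), use Lemma~\ref{thm_leaf2} for the base case, and propagate satisfaction upward via the logical equivalences $\chi \equiv \chi_1 \wedge \chi_2$ and $\chi \equiv \chi_1 \vee \chi_2$ for $T$-signed $\alpha$- and $\beta$-formulae. Your base case is in fact more explicit than the paper's, since you spell out exactly where the assumption that $\mathcal{B}$ is not $T$-closed is used (to ensure that $T\neg p$ on the branch forces $p \notin \mathrm{At}(\mathcal{B})$), a point the paper leaves implicit.
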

\begin{proof}
We prove the lemma by induction on the height of the subtree with root $u$.

\textbf{Base Case} ($u$ is a leaf): By Lemma~\ref{thm_leaf2}, $A_u$ is the set of all unmarked $T$-formulae on $\mathcal{B}$. But these are all literals as $\mathcal{B}$ is completed, and hence the subset of positive atoms is equal to $\mathrm{At}(\mathcal{B})$.

\textbf{Inductive step}: If $u$ is of extension type (A) with child node $v$ then the models of $A_u$ are exactly the same as the models of $A_v$. This is true for all $\alpha$-type formulae. For example, if the extension process (A) was used on formula $T(\psi\wedge\chi)$ and the node $v$ was labelled $T\psi$ then $A_v=\{\psi, \chi\}\cup A_u\setminus\{\psi\wedge\chi\}$ and this has the same models as $A_u$. By the induction hypothesis, $\mathrm{At}(\mathcal{B})\models A_v$ and hence $\mathrm{At}(\mathcal{B})\models A_u$.

If $u$ is of branch type (B) with children $v$ and $w$ then $\mathrm{At}(\mathcal{B})\models A_v$ and $\mathrm{At}(\mathcal{B})\models A_w$. The argument works similarly for all $\beta$-type formulae; for example, if the extension process was using formula $T(\psi\vee\chi)$ and $v$ is labelled $T\psi$ and $w$ is labelled $T\chi$, then $A_u=(\{\psi\vee\chi\}\cup A_v)\setminus\{\psi\}$. Hence $\mathrm{At}(\mathcal{B})\models A_v$ implies $\mathrm{At}(\mathcal{B})\models A_u$.
\qed
\end{proof}

We now approach the simulation result (Theorem~\ref{thm_MLK>OTAB}) and start to construct \MLK proofs. 
%in Theorem~\ref{thm_MLK>OTAB}. Lemma~\ref{thm_omegagamma} is used to prove Lemma~\ref{thm_omegaB}. 
For the next two lemmas, we fix an \OTAB tableau of size $k$ and use the notation from Definitions~\ref{def_ab} and \ref{def_cd} (recall in particular the definition of  $\omega$ at the end of Definition~\ref{def_cd}).

\begin{lemma}\label{thm_omegagamma}
 There is a polynomial $q$ such that for every binary node $u$, every proper subset $A' \subset A_u$ and every $\gamma\in A_u\setminus A'$ we can construct an \MLK-proof of $A',\omega \vdash_M \gamma $ of size at most $q(k)$.
\end{lemma}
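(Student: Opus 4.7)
The plan is to split $\omega=\bigvee_{\mathcal{B}\in E_u}\phi_\mathcal{B}$, where $\phi_\mathcal{B}=\bigwedge_{p\in\mathrm{At}(\mathcal{B})}p$, by iterating $(\vee\vdash_M)$, reducing the task to constructing a polynomial-size \MLK-proof of $A',\phi_\mathcal{B}\vdash_M\gamma$ for each $\mathcal{B}\in E_u$; since $|E_u|\leq k$, this outer splitting incurs only polynomial overhead. Fix $\mathcal{B}\in E_u$ and set $P=\mathrm{At}(\mathcal{B})$. By inspection of Definition~\ref{def_cd}, every branch in $E_u$ passes through $u$; being $\theta(\mathcal{B}')$ for a type-2 ignorable $\mathcal{B}'$, it is completed and not $T$-closed, so Lemma~\ref{thm_atsat} delivers $\mathrm{At}(\mathcal{B})\models A_u\supseteq A'\cup\{\gamma\}$.

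For each $\alpha\in A_u$ we produce a polynomial-size \MLK-proof of $\phi_\mathcal{B}\vdash_M\alpha$ by a uniform recipe: Lemma~\ref{thm_shortLK} with $\Sigma^+=\VAR(\alpha)\cap P$ and $\Sigma^-=\VAR(\alpha)\setminus P$ gives a polynomial \LK-proof of $\Sigma^+,\neg\Sigma^-,\phi_\mathcal{B}\vdash\alpha$; we lift via $(\vdash\vdash_M)$, eliminate each $p\in\Sigma^+$ by (M-cut) against the trivial $\phi_\mathcal{B}\vdash_M p$, and eliminate each $\neg q$ with $q\in\Sigma^-$ by (M-cut) against the $(\vdash_M)$-axiom instance $\phi_\mathcal{B}\vdash_M\neg q$, valid because $q\notin P$ implies $q$ does not occur in $\phi_\mathcal{B}$ at all. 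To avoid an exponential cascade when pushing $A'$ onto the left, we introduce $\beta=\bigwedge A'$ as an intermediary: iterated $(\vdash_M\wedge)$ across the sequents $\phi_\mathcal{B}\vdash_M\alpha_i$ ($\alpha_i\in A'$) yields $\phi_\mathcal{B}\vdash_M\beta$; one application of $(\bullet\vdash_M)$ with $\phi_\mathcal{B}\vdash_M\gamma$ produces $\phi_\mathcal{B},\beta\vdash_M\gamma$; and $n=|A'|$ further applications of $(\bullet\vdash_M)$ add the $\alpha_i$ one at a time to the left, each using the classically trivial premise $\phi_\mathcal{B},\beta,\alpha_1,\dots,\alpha_{i-1}\vdash_M\alpha_i$ obtained from $\beta\vdash\alpha_i$ by weakening and $(\vdash\vdash_M)$. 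A final (M-cut) on $\beta$ with the trivial sequent $A',\phi_\mathcal{B}\vdash_M\beta$ removes $\beta$ and leaves $A',\phi_\mathcal{B}\vdash_M\gamma$.

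The central obstacle is the elimination of the literals $\neg q$ for $q\in\VAR(A'\cup\{\gamma\})\setminus P$: such a $q$ may occur positively in some $\alpha\in A'$, so the $(\vdash_M)$ axiom is unavailable from the full antecedent $A',\phi_\mathcal{B}$. Our recipe confines every use of the $(\vdash_M)$ axiom to the smaller antecedent $\phi_\mathcal{B}$, where $q$ is guaranteed not to occur positively, and only transfers to the fuller antecedent afterwards via $(\bullet\vdash_M)$ and (M-cut), routed through the single formula $\beta$. Combining the $|E_u|$ resulting sequents by iterated $(\vee\vdash_M)$ then gives $A',\omega\vdash_M\gamma$ in the promised polynomial size $q(k)$.
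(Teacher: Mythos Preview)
Your argument is correct and rests on the same key insight as the paper's proof: the $(\vdash_M)$ axiom for $\neg q$ must be invoked from the minimal antecedent $\{\phi_\mathcal{B}\}$ (where $q$ certainly does not occur positively), and only afterwards is the antecedent enlarged to contain $A'$ via $(\bullet\vdash_M)$. Where you differ is in the packaging. The paper keeps $\omega$ in the antecedent throughout: it first derives $\omega\vdash_M F_u,\gamma$, then for each $M=\mathrm{At}(\mathcal{B})$ obtains $\omega,\bigwedge_{p\in M}p\vdash_M\gamma$ (by the same axiom-then-$(\bullet\vdash_M)$ trick, adding $\omega$ rather than $A'$), and M-cuts away the elements of $F_u$ to reach $\omega\vdash_M\gamma$. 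The growth to arbitrary $A'\subset A_u$ is then done by a plain induction on $|A'|$, using $(\bullet\vdash_M)$ with two instances of the hypothesis. You instead split $\omega$ up front with $(\vee\vdash_M)$ and handle the extension to $A'$ in one shot via the intermediary $\beta=\bigwedge A'$.

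Two minor points of precision. First, your invocation of Lemma~\ref{thm_shortLK} should take $\Gamma=\emptyset$, $\Delta=\{\alpha\}$ (so that $\VAR(\Gamma\cup\Delta)=\VAR(\alpha)=\Sigma^+\sqcup\Sigma^-$) and then weaken $\phi_\mathcal{B}$ in; as written, including $\phi_\mathcal{B}$ in $\Gamma$ may violate the variable-covering hypothesis. Second, when you M-cut away $\neg q$ you actually need the axiom instance with the full current antecedent, e.g.\ $\Sigma^+,\neg(\Sigma^-\setminus\{q\}),\phi_\mathcal{B}\vdash_M\neg q$, not just $\phi_\mathcal{B}\vdash_M\neg q$; but since $q\notin P\supseteq\Sigma^+$ this larger antecedent still contains no positive occurrence of $q$, so the $(\vdash_M)$ axiom applies directly and your step goes through.
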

\begin{proof}
\textbf{Induction Hypothesis} (on the number of formulae of $A_u$ used in the antecedent: $|A'|$): We  can find a $q(k)$-size \MLK proof containing all sequents  $A',\omega \vdash_M \gamma$ for every $\gamma\in A_u\setminus A'$ .

\textbf{Base Case} (when $A'$ is empty):
For the base case we aim to prove $\omega\vdash_M \gamma$, and repeat this for every $\gamma$. We use two ingredients. Firstly, we need the sequent $\omega\vdash_M F_u, \gamma$ which is easy to prove using weakening and ($\vee\vdash$), since $\omega$ is a disjunction of the elements in $F_u$. Our second ingredient is a scheme of $\omega,\bigwedge_{p\in M} p \vdash_M \gamma $ for all the $\bigwedge_{p\in M} p$ in $F_u$, \ie $M=\mathrm{At}(\mathcal{B})$ for some $\mathcal{B}\in E_u$. With these we can repeatedly use (M-cut) on the first sequent for every element in $F_u$. We now show how to efficiently prove the sequents of the form $\omega,\bigwedge_{p\in M} p \vdash_M \gamma $. 

 For branch $\mathcal{B}\in E_u$, as $\mathrm{At}(\mathcal{B})$ is a model $M$ for $A_u$ by Lemma~\ref{thm_atsat}, $M\models\gamma$.
Since no atom  $a$ in $\VAR(\gamma)\setminus M$ appears positive in the set $M$ we can infer $M\vdash_M \neg a$ directly via $(\vdash_M)$. With rule ($\vdash_M\wedge$) we can derive $\bigwedge_{p\in M} p \vdash_M \bigwedge_{p\in \VAR(\gamma)\setminus M}\neg p$ in a polynomial-size proof.
Using ($\vdash$), ($\vdash\vee\bullet$), and ($\vdash\bullet\vee$) we can derive $\bigwedge_{p\in M} p\vdash \omega$. We then use these sequents in the proof below, denoting $\bigwedge_{p\in \VAR(\gamma)\setminus M}\neg p$ as $n(M)$:
\begin{prooftree}
\AxiomC{$\bigwedge_{p\in M} p\vdash \omega$}
\RightLabel{($\vdash\vdash_M$)}
\UnaryInfC{$\bigwedge_{p\in M} p\vdash_M \omega$}
\AxiomC{\hspace*{-0.5em}$\bigwedge_{p\in M} p \vdash_M n(M)$}
\RightLabel{($\bullet\vdash_M$)}
\BinaryInfC{$\omega, \bigwedge_{p\in M} p \vdash_M  n(M)$}
\end{prooftree}

From Lemma~\ref{thm_shortLK},
$M, \neg \VAR(\gamma)\setminus M \vdash \gamma$ can be derived in a polynomial-size proof. We use simple syntactic manipulation to change the antecedent into an equivalent conjunction and then weaken to derive $\omega, \bigwedge_{p\in M} p, \bigwedge_{p\in \VAR(\gamma)\setminus M} \neg p \vdash_M \gamma $ in a polynomial-size proof.
Then we use:

\begin{prooftree}
\AxiomC{$\omega,\bigwedge_{p\in M} p, n(M) \vdash_M \gamma$}
\AxiomC{$\omega, \bigwedge_{p\in M} p \vdash_M  n(M)$}
\RightLabel{(M-cut)}
\BinaryInfC{$\omega,\bigwedge_{p\in M} p \vdash_M \gamma $}
\end{prooftree}

\textbf{Inductive Step}: We look at proving $A', \gamma', \omega\vdash_M \gamma$, for every other $\gamma\in A_u\setminus A'$.  For each $\gamma$ we use two instances of the inductive hypothesis: $A',\omega \vdash_M \gamma$ and $A',\omega \vdash_M \gamma'$.
\begin{prooftree}
\AxiomC{$A',\omega \vdash_M \gamma'$}
\AxiomC{$A',\omega\vdash_M \gamma$}
\RightLabel{($\bullet\vdash_{M}$)}
\BinaryInfC{$A', \gamma', \omega\vdash_M \gamma$}
\end{prooftree}

Since we repeat this for every $\gamma$ we only add  $|(A_u\setminus A')\setminus \{\gamma\}|$ many lines in each inductive step and retain a polynomial bound.
\qed
\end{proof}

The previous lemma was an essential preparation for our next Lemma~\ref{thm_omegaB}, which in turn will be the crucial ingredient for the p-simulation in Theorem~\ref{thm_MLK>OTAB}.

\begin{lemma}\label{thm_omegaB}
There is a polynomial $q$ such for every binary node $u$ there is an \MLK-proof of $A_u,\omega \vdash B_u $ of size at most $q(k)$. 
\end{lemma}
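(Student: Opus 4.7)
The plan is to prove the claim by induction on the height of the subtree $T_u$ rooted at $u$, constructing the required \MLK-proof bottom-up. Since $u$ is a binary node with children $v, w$ arising from a $\beta$-rule, we first derive appropriate sequents at every leaf of $T_u$, propagate them upward through extension nodes, invoke the inductive hypothesis at every binary node strictly below $u$, and finally combine at $u$ using the $\beta$-rule structure.

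At each leaf $\ell$ of $T_u$ I would derive a sequent of the form $A_\ell \vdash_M B_\ell, C_\ell$ by distinguishing the closure/ignorable cases. For $T$-, $F$-, and $TF$-closed leaves, Lemma~\ref{thm_leaf1} together with the closing literal pair yields $A_\ell \vdash B_\ell$ in \LK, which we lift to \MLK via $(\vdash \vdash_M)$. For ignorable type-1 leaves, the atom $a$ with $F\neg a$ present but $Ta$ absent enables the minimal entailment axiom $(\vdash_M)$ to derive $A_\ell \vdash_M \neg a$, and then Lemma~\ref{thm_leaf1} on $F\neg a$ gives us what we need inside $B_\ell$. For ignorable type-2 leaves, $C_\ell$ contains the conjunction $\bigwedge_{p \in \mathrm{At}(\theta(\mathcal{B}))} p$, which is precisely the $\omega$-disjunct encoding the minimising branch, so the sequent is obtained by trivial weakening using Lemma~\ref{thm_MLK_weak}.

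For unary extension nodes I would propagate the derived sequent upward: each rule (A) step corresponds to an $\alpha$-decomposition, so Lemma~\ref{thm_abshort}, combined with (M-cut) and the \LK-to-\MLK lifting via $(\vdash \vdash_M)$, lets us replace $\alpha_1, \alpha_2$ in the antecedent or consequent by $\alpha$ (or vice versa) with only polynomial overhead per step. For each binary node $u'$ strictly below $u$ inside $T_u$, the inductive hypothesis supplies $A_{u'}, \omega_{u'} \vdash_M B_{u'}$, and the reconciliation with the outer $\omega_u$ is handled as follows: branches in $E_{u'}$ that are no longer in $E_u$ have been ``promoted'' to $D_u$ at the $u$-level and so correspond to external $\theta$-branches from $u$'s viewpoint, which can be discharged from the antecedent using Lemma~\ref{thm_omegagamma}, since that lemma provides exactly the sequents $A', \omega \vdash_M \gamma$ needed to replace each such $D_u$-conjunction by an element of $A_u$ via (M-cut).

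At the top, to combine the derivations for $v$ and $w$ into the target sequent at $u$, I would use the $\beta$-rule structure: for the $T\chi$ case with $\beta_1 = T\chi_1$, $\beta_2 = T\chi_2$, apply $(\vee \vdash_M)$ together with Lemma~\ref{thm_abshort}'s derivation of $\chi \vdash \chi_1 \vee \chi_2$ and (M-cut) to fold $A_v, A_w$ back into $A_u$; the $F\chi$ case is symmetric, using $(\vdash_M \bullet \vee)$ and $(\vdash_M \vee \bullet)$ on the consequent side. The main obstacle I expect is the bookkeeping around $D$, $E$, and $\omega$ at the intermediate binary nodes, since $\omega_{u'}$ refers to a different collection of branches than $\omega_u$; carefully applying Lemma~\ref{thm_omegagamma} to reconcile them — discharging external $\theta$-disjuncts while keeping the overall proof polynomial in $k$ — is the technically delicate part of the argument.
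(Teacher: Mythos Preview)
Your approach has a genuine structural gap. The lemma is a \emph{local} statement about a single binary node $u$: it asserts that from the disjunction $\omega=\bigvee F_u$ (built only from the branches in $E_u$, all of which pass through $u$) together with $A_u$ one can derive $B_u$. The paper's proof exploits this locality directly and never recurses into the tableau. Its induction is on $|A'|$ for $A'\subseteq A_u$. In the base case $A'=\emptyset$ one derives $\omega\vdash_M B_u$ purely semantically: every $\mathcal{B}\in E_u$ passes through $u$ and is completed but not $T$-closed, so by Lemma~\ref{thm_atsat} the model $M=\mathrm{At}(\mathcal{B})$ satisfies $A_u$; being a $\theta$-branch it is not ignorable type~2, so $M$ is a minimal model of $\Gamma$ and hence satisfies $\Delta$ and therefore $B_u$. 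From $M\models B_u$ one gets $\omega,\bigwedge_{p\in M}p\vdash_M B_u$ via Lemma~\ref{thm_shortLK} and the $(\vdash_M)$ axiom, and then cuts over all disjuncts of $\omega$. The inductive step just adds one more $\gamma\in A_u$ to the antecedent using Lemma~\ref{thm_omegagamma} and $(\bullet\vdash_M)$.

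Your plan instead runs the tableau induction of Theorem~\ref{thm_MLK>OTAB} inside the subtree $T_u$. This reverses the dependency structure: Theorem~\ref{thm_MLK>OTAB} \emph{uses} Lemma~\ref{thm_omegaB} at every binary node precisely to cut $\omega$ out of the succedent. Concretely, your construction at leaves and through unary nodes produces sequents of the shape $A_x\vdash_M B_x,C_x$; combining these at $u$ via $(\vee\vdash_M)$ or $(\vdash_M\wedge)$ yields something like $A_u\vdash_M B_u,C_u,\omega$, i.e.\ with $\omega$ and $C_u$ in the \emph{succedent}. The target has $\omega$ in the \emph{antecedent} and no $C_u$ at all, and you give no mechanism to cross that gap. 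Two of your supporting claims are also incorrect: for a type-2 ignorable leaf $\ell$ under $u$, the conjunction in $C_\ell$ corresponds to $\theta(\mathcal{B})$, which may lie \emph{outside} $T_u$ (that is exactly when it survives into $D_u$), so it is not in general a disjunct of $\omega$; and Lemma~\ref{thm_omegagamma} produces $A',\omega\vdash_M\gamma$ only for $\gamma\in A_u$, so it cannot be used to discharge the atom-conjunctions that distinguish $\omega_{u'}$ from $\omega_u$. The missing idea is the direct semantic argument via Lemma~\ref{thm_atsat} that the disjuncts of $\omega$ already encode models satisfying $B_u$.
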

\begin{proof}
\textbf{Induction Hypothesis} (on the number of formulae of $A_u$ used in the antecedent: $|A'|$): Let $A'\subseteq A_u$. There is a fixed polynomial $q$ such that $A', \omega \vdash B_u $ has an \MLK-proof of size at most $q(|\omega|)$.

\textbf{Base Case} (when $A'$ is empty):
We approach this very similarly as in the previous lemma.
Using weakening and ($\vee\vdash$), the sequent $\omega\vdash_M F_u, B_u$ can be derived in a polynomial-size proof. By repeated use of the cut rule on sequents of the form $\omega,\bigwedge_{p\in \mathrm{At}(\mathcal{B})} p \vdash_M B_u $ for $\mathcal{B}\in E_u$ the sequent $\omega\vdash_M B_u$ is derived. Now we only need to show that we can efficiently obtain $\omega,\bigwedge_{p\in M} p \vdash_M B_u $.

Consider branch $\mathcal{B}\in E_u$. As $\mathrm{At}(\mathcal{B})$ is a minimal model $M$ for $\Gamma$ by Lemma~\ref{thm_atsat}, this model
$M$ must satisfy $\Delta$ and given the limitations of the branching processes of $F$-labelled formulae, $B_u$ as well.

Similarly as in the base case of Lemma~\ref{thm_omegagamma} we can derive $\bigwedge_{p\in M} p \vdash_M \bigwedge_{p\in \VAR(B_u)\setminus M}\neg p$ and $\bigwedge_{p\in M} p\vdash \omega$ in a polynomial-size proof.
We then use these sequents in the proof below once again, denoting $\bigwedge_{p\in \VAR(\gamma)\setminus M}\neg p$ as $n(M)$.

\begin{prooftree}
\AxiomC{$\bigwedge_{p\in M} p\vdash \omega$}
\RightLabel{($\vdash\vdash_M$)}
\UnaryInfC{$\bigwedge_{p\in M} p\vdash_M \omega$}
\AxiomC{\hspace*{-0.5em}$\bigwedge_{p\in M} p \vdash_M n(M)$}
\RightLabel{($\bullet\vdash_M$)}
\BinaryInfC{$\omega, \bigwedge_{p\in M} p \vdash_M n(M)$}
\end{prooftree}

We can use $M$ satisfying $B_u$ to derive $\omega,\bigwedge_{p\in M} p, n(M) \vdash B_u$ in the same way as we derive $\omega, \bigwedge_{p\in M} p, \bigwedge_{p\in \VAR(\gamma)\setminus M} \neg p \vdash \gamma $ in Lemma~\ref{thm_omegagamma}.

{\small
\begin{prooftree}
\AxiomC{$\omega,\bigwedge_{p\in M} p, n(M)  \vdash_M B_u$}
\AxiomC{$\omega, \bigwedge_{p\in M} p \vdash_M n(M)$}
\RightLabel{(M-cut)}
\BinaryInfC{$\omega,\bigwedge_{p\in M} p \vdash_M B_u $}
\end{prooftree}
}

\textbf{Inductive Step}:
Assume that $A',\omega\vdash_M B_u$ has already been derived.
Let $\gamma\in A_u \setminus A'$. We use Lemma~\ref{thm_omegagamma} to get a short proof of $A',\omega \vdash_M \gamma$. One application of rule $(\bullet\vdash_{M})$
\begin{prooftree}
\AxiomC{$A',\omega\vdash_M B_u$}
\AxiomC{$A',\omega \vdash_M \gamma$}
\RightLabel{($\bullet\vdash_{M}$)}
\BinaryInfC{$A', \gamma,\omega \vdash_M B_u$}
\end{prooftree} 
finishes the proof.
\qed
\end{proof}

\begin{theorem}\label{thm_MLK>OTAB}
\MLK p-simulates \OTAB.
\end{theorem}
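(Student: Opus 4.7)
The plan is to prove by induction on the \OTAB tableau (from leaves upward) that for every node $u$ there is a polynomial-size \MLK-proof of the sequent $A_u \vdash_M B_u, C_u$. Applied at the root, where $A_u=\Gamma$, $B_u=\Delta$, and $C_u=\emptyset$ (since the $\theta$-image of every ignorable type-2 branch is a branch of the whole tableau, it eventually lies in a sibling's subtree and hence gets absorbed into some $E_{u'}$ on the way up to the root), this yields the desired \MLK-proof of $\Gamma\vdash_M\Delta$.

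At the leaves the three possibilities offered by the closure/ignorability theorem are treated separately. For a closed branch (in any of the three senses), two invocations of Lemma~\ref{thm_leaf1} yield \LK-provable contradictory sequents whose \LK-cut, followed by $(\vdash\vdash_M)$ and weakening via Lemma~\ref{thm_MLK_weak}, produces $A_u\vdash_M B_u, C_u$. For an ignorable type-1 branch the distinguished atom $a$ fails to occur positively anywhere in $A_u$ (the completed branch has $A_u$ consisting of literals and $Ta\notin\mathcal{B}$), so the axiom $(\vdash_M)$ directly gives $A_u\vdash_M\neg a$ and weakening closes the case since $\neg a\in B_u$. For an ignorable type-2 leaf, $C_u$ is the singleton $\{\bigwedge_{p\in\mathrm{At}(\theta(\mathcal{B}))}p\}$; since $\mathrm{At}(\theta(\mathcal{B}))\subset\mathrm{At}(\mathcal{B})\subseteq A_u$, the conjunction is derivable in \LK by elementary means, after which $(\vdash\vdash_M)$ and weakening finish the case.

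At a unary node (extension rule (A)) we have $C_u=C_v$ and the child sequent differs from $A_u\vdash_M B_u, C_u$ only by replacing some $\chi$ with the pair $\chi_1,\chi_2$ on one side. The equivalences of Lemma~\ref{thm_abshort}, together with $(\vdash\vdash_M)$, $(\bullet\vdash_M)$ and a couple of M-cuts, translate the child sequent into the parent sequent. The main obstacle is the binary node (rule (B)). Its children satisfy $B_v=B_w=B_u$, while the construction of $D_u$ and $E_u$ in Definition~\ref{def_cd} ensures $C_v\cup C_w=C_u\cup F_u$. Weakening both child sequents to share the right-hand side $B_u, C_u, F_u$, then combining them via Lemma~\ref{thm_abshort} with $(\vee\vdash_M)$ in the $T\chi$ case or $(\vdash_M\wedge)$ in the $F\chi$ case, and folding $\chi_1\vee\chi_2$ (resp.\ $\chi_1\wedge\chi_2$) back into $\chi$ through $(\bullet\vdash_M)$ and one M-cut, produces $A_u\vdash_M B_u, C_u, F_u$. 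Repeated use of $(\vdash_M\bullet\vee)$ and $(\vdash_M\vee\bullet)$ collects the members of $F_u$ into the single disjunction $\omega$, giving $A_u\vdash_M B_u, C_u, \omega$; a final M-cut with Lemma~\ref{thm_omegaB}'s $A_u,\omega\vdash_M B_u$ kills $\omega$ and leaves exactly $A_u\vdash_M B_u, C_u$. This is precisely the step where the otherwise global information about ignorable type-2 closures, pre-processed into the annotations of Definitions~\ref{def_ab} and \ref{def_cd}, is converted into a local \MLK inference.

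Since each inductive step adds only polynomially many lines, with the polynomial bounds supplied by Lemmas~\ref{thm_shortLK}, \ref{thm_MLK_weak}, \ref{thm_abshort} and \ref{thm_omegaB}, and the tableau has at most polynomially many nodes in its size, the resulting \MLK-proof is polynomial in the size of the \OTAB tableau. The whole construction, including the computation of the $A_u,B_u,C_u,D_u,E_u,F_u$ annotations, runs in polynomial time, so the simulation is in fact a p-simulation.
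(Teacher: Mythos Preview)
Your proposal is correct and follows essentially the same strategy as the paper: induction from the leaves proving $A_u\vdash_M B_u, C_u$, the same five leaf cases, the same handling of unary nodes via Lemma~\ref{thm_abshort} with $(\bullet\vdash_M)$ and M-cuts, and the same use of Lemma~\ref{thm_omegaB} to eliminate $\omega$ at binary nodes. Two small slips worth noting: the equality $B_v=B_w=B_u$ at a binary node holds only in the $T\chi$ case (for $F\chi$ one has $A_v=A_w=A_u$ instead, which you implicitly use anyway when invoking $(\vdash_M\wedge)$), and collapsing the set $F_u$ into the single disjunction $\omega$ on the succedent requires M-cuts against the \LK-provable sequents $A_u,f_i\vdash\omega$ rather than the $(\vdash_M\vee)$ rules alone; both fixes are immediate and the paper handles them exactly this way.
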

\begin{proof}
\textbf{Induction Hypothesis} (on the height of the subtree with root $u$): For node $u$, we can derive $A_u\vdash_M B_u, C_u$ in \MLK in polynomial size (in the full tableau).

\textbf{Base Case} ($u$ is a leaf): If the branch is $T$-closed, then by Lemma~\ref{thm_leaf1}, for some formula $\phi$ we can derive $A_u\vdash \phi$ and $A_u\vdash\neg \phi$. Hence $A_u\vdash \phi\wedge\neg\phi$ can be derived and with $\phi\wedge\neg\phi\vdash$ and the cut rule we can derive $A_u\vdash $ in a polynomial-size proof. By weakening and using ($\vdash\vdash_M$) we can derive $A_u\vdash_M B_u$ in polynomial size as required.

If the branch is $F$-closed, then by Lemma~\ref{thm_leaf1}, for some formula $\phi$ we can derive $\phi\vdash B_u$ and $\neg \phi\vdash B_u$. Hence 
$\phi\vee\neg\phi\vdash B_u$ can be derived and with $\vdash\phi\vee\neg\phi$ and the cut rule we can derive $\vdash B_u $ in a polynomial-size proof. By weakening and using ($\vdash\vdash_M$) we can derive $A_u\vdash_M B_u$ in polynomial size.

If the branch is $TF$-closed, then by Lemma~\ref{thm_leaf1}, for some formula $\phi$ we can derive $A_u\vdash \phi$ and $\phi\vdash B_u$. Hence via the cut rule and using ($\vdash\vdash_M$) we can derive $A_u\vdash_M B_u$ in polynomial size as required.

If the branch is ignorable type-1 then the branch is completed. Therefore $A_u$ is a set of atoms and there is some atom $a\notin A_u$ such that  $\neg a\in B_u$. It therefore follows that $A_u\vdash_M \neg a$ can be derived as an axiom using the ($\vdash_M$) rule. We then use Lemma~\ref{thm_MLK_weak} to derive $A_u \vdash_M B_u $ in polynomial size.

If the branch is ignorable type-2 then $p\in \mathrm{At}(\theta(\mathcal{B}))$ implies $p\in A_u$. Since $C_u=\{\bigwedge_{p\in\mathrm{At}(\theta(\mathcal{B}))} p \}$ we can find a short proof  of $A_u\vdash C_u$ using ($\vdash\wedge$).

\textbf{Inductive Step}: The inductive step splits into four cases according to which extension or branching rule is used on node $u$.

\emph{Case 1.} Extension rule (A) is used on node $u$ for formula $T\phi$ with resulting nodes $v$ and $w$ labelled $T\phi_1$, $T\phi_2$, respectively. 

\begin{prooftree}
\AxiomC{}
\UnaryInfC{$\phi_1\vdash\phi_1$}
\RightLabel{($\bullet\vdash$)}
\UnaryInfC{$\phi_1, \phi_2\vdash\phi_1 $}
\AxiomC{}
\UnaryInfC{$\phi_2\vdash\phi_2$}
\RightLabel{($\bullet\vdash$)}
\UnaryInfC{$\phi_1, \phi_2\vdash\phi_2 $}
\RightLabel{($\vdash\wedge$)}
\BinaryInfC{$\phi_1, \phi_2 \vdash \phi_1\wedge\phi_2$}
\end{prooftree}
Since we are extending the branch on an $\alpha$-type formula signed with $T$, we can find a short proof of $\phi_1\wedge\phi_2\vdash \phi$ using Lemma~\ref{thm_abshort}.
%Using Lemma~\ref{thm_shortLK}, 
Together with $\phi_1, \phi_2 \vdash \phi_1\wedge\phi_2$ shown above we derive:

\begin{prooftree}
\AxiomC{$\phi_1, \phi_2 \vdash \phi_1\wedge\phi_2$}
\AxiomC{$\phi_1\wedge\phi_2 \vdash \phi$}
\RightLabel{(cut)}
\BinaryInfC{$\phi_1, \phi_2 \vdash \phi$}
\end{prooftree}

By definition we have $\phi_1,\phi_2\in A_v$, and then by weakening $\phi_1, \phi_2 \vdash \phi$ we obtain $A_v\vdash \phi$. By Definitions~\ref{def_ab} and \ref{def_cd}, $B_v=B_u$ and likewise $C_u=C_v$. Hence $A_v\vdash_M B_u, C_u$ is available by the induction hypothesis. From this we get:

\begin{prooftree}
\AxiomC{$A_v\vdash \phi$}
\RightLabel{($\vdash\vdash_M$)}
\UnaryInfC{$A_v\vdash_M \phi$}
\AxiomC{$A_v\vdash_M B_u, C_u$}
\RightLabel{($\bullet \vdash_M$)}
\BinaryInfC{$A_v, \phi \vdash_M B_u, C_u$}
\end{prooftree}

$A_u\vdash \phi_1$ and $A_u\vdash \phi_2$ also have short proofs from weakening axioms. These can be used to cut out $\phi_1,\phi_2$ from the antecedent of $A_v, \phi \vdash_M B_u, C_u$ resulting in $A_u \vdash_M B_u, C_u$ as required.

\emph{Case 2.}  
Extension rule (A) is used on node $u$ for formula $F\phi$ with resulting nodes $v$ and $w$ labelled $F\phi_1$, $F\phi_2$, respectively. We can find short proofs of $A_u,\phi_1 \vdash \phi_1\vee\phi_2$, $A_u,\phi_2 \vdash \phi_1\vee\phi_2$ using axioms, weakening and the rules ($\vdash \bullet \vee$), ($\vdash  \vee \bullet$). Similarly as in the last case, we have $A_v=A_u$ and likewise $C_u=C_v$. Therefore, by induction hypothesis $A_u \vdash_M B_v, C_u$ is available with a short proof.

\begin{prooftree}
\AxiomC{$A_u \vdash_M B_v, C_u$}
\AxiomC{$A_u,\phi_1 \vdash \phi_1\vee\phi_2$}
\RightLabel{($\vdash\vdash_M$)}
\UnaryInfC{$A_u,\phi_1 \vdash_M \phi_1\vee\phi_2$}
\RightLabel{(M-cut)}
\BinaryInfC{$A_u \vdash_M B_v\setminus\{\phi_1\}, \phi_1\vee\phi_2, C_u$}
\end{prooftree}

We can do the same trick with $\phi_2$:
{\footnotesize
\begin{prooftree}
\AxiomC{$A_u \vdash_M B_v\setminus\{\phi_1\}, \phi_1\vee\phi_2, C_u$}
\AxiomC{$A_u,\phi_2 \vdash \phi_1\vee\phi_2$}
\RightLabel{($\vdash\vdash_M$)}
\UnaryInfC{$A_u,\phi_2 \vdash_M \phi_1\vee\phi_2$}
\RightLabel{(M-cut)}
\BinaryInfC{$A_u \vdash_M B_u\setminus\{\phi\}, \phi_1\vee\phi_2, C_u$}
\end{prooftree}
}

Since $F\phi$ is an $\alpha$-type formula, then $\phi_1\vee\phi_2\vdash\phi$ by Lemma~\ref{thm_abshort}, and by weakening $A_u, \phi_1\vee\phi_2\vdash\phi$.
The derivation is the finished by:
{\footnotesize
\begin{prooftree}
\AxiomC{$A_u \vdash_M B_u\setminus\{\phi\}, \phi_1\vee\phi_2, C_u$}
\AxiomC{$A_u, \phi_1\vee\phi_2\vdash\phi$}
\RightLabel{($\vdash\vdash_M$)}
\UnaryInfC{$A_u, \phi_1\vee\phi_2\vdash_M\phi$}
\RightLabel{(M-cut)}
\BinaryInfC{$A_u \vdash_M B_u, C_u$}
\end{prooftree}
}

\emph{Case 3.} 
Branching rule (B) is used on node $u$ for formula $T\phi$ with children $v$ and $w$ labelled $T\phi_1$, $T\phi_2$, respectively. The sequents $A_v\vdash_M B_u, C_v$ and $A_w\vdash_M B_u, C_w$ are available from the induction hypothesis. 

$A_v\vdash_M B_u, C_u, F_u$ and $A_w\vdash_M B_u, C_u, F_u$ can be derived via weakening by Lemma~\ref{thm_MLK_weak}. From these sequents, simple manipulation through classical logic and the cut rule gives us $A_v\vdash_M B_u, C_u, \omega$ and $A_w\vdash_M B_u, C_u, \omega$. Using the rule $(\vee\vdash_M)$ we obtain $A_u\setminus\{\phi\}, \phi_1\vee\phi_2 \vdash_M B_u, C_u, \omega$. Since $\phi\in A_u$, from Lemma~\ref{thm_abshort} we derive
$\phi\vdash \phi_1\vee\phi_2$ and $\phi_1\vee\phi_2\vdash \phi$ in polynomial size.
Weakening derives $A_u \vdash\phi_1\vee\phi_2$ and $A_u\setminus\{\phi\}, \phi_1\vee\phi_2 \vdash \phi$.
From these we derive:

{\scriptsize
\begin{prooftree}
\AxiomC{$A_u\setminus\{\phi\}, \phi_1\vee\phi_2 \vdash_M B_u, C_u, \omega$}
\AxiomC{$A_u\setminus\{\phi\}, \phi_1\vee\phi_2 \vdash \phi$}
\RightLabel{($\vdash\vdash_M$)}
\UnaryInfC{$A_u\setminus\{\phi\}, \phi_1\vee\phi_2 \vdash_M \phi$}
\RightLabel{($\bullet\vdash_M$)}
\BinaryInfC{$A_u, \phi_1\vee\phi_2\vdash_M B_u, C_u, \omega $}
\end{prooftree}
}

\begin{prooftree}
\AxiomC{$A_u, \phi_1\vee\phi_2\vdash_M B_u, C_u, \omega $}
\AxiomC{$A_u \vdash\phi_1\vee\phi_2$}
\RightLabel{($\vdash\vdash_M$)}
\UnaryInfC{$A_u \vdash_M\phi_1\vee\phi_2$}
\RightLabel{(M-cut)}
\BinaryInfC{$A_u \vdash_M B_u, C_u, \omega $}
\end{prooftree}

From Lemma~\ref{thm_omegaB}, $A_u, \omega\vdash_M B_u, C_u $ has a polynomial size proof. We can then finish the derivation with a cut:

\begin{prooftree}
\AxiomC{$A_u, \omega \vdash_M B_u$}
\AxiomC{$A_u\vdash_M B_u, C_u, \omega $}
\RightLabel{(M-cut)}
\BinaryInfC{$A_u\vdash_M B_u, C_u$}
\end{prooftree}

\emph{Case 4.} 
Branching rule (B) is used on node $u$ for formula $F\phi$ with children $v$ and $w$ labelled $F\phi_1$, $F\phi_2$, respectively. The sequents $A_u\vdash_M B_v, C_v$ and $A_u\vdash_M B_w, C_w$ are available from the induction hypothesis.

From these two sequents we obtain via weakening $A_u\vdash_M B_v, C_u, F_u$ and $A_u\vdash_M B_w, C_u, F_u$. We can turn $F_u$ into the disjunction of its elements by simple manipulation through classical logic and the cut rule  and derive $A_u\vdash_M B_v, C_u, \omega$ and $A_u\vdash_M B_w, C_u, \omega$. Using the rule $(\vdash_M\wedge)$ we obtain $ A_u\vdash_M  B_u\setminus\{\phi\}, \phi_1\wedge\phi_2, C_u, \omega$. 
Since $\phi_1\wedge\phi_2\vdash \phi$ by Lemma~\ref{thm_abshort}, we derive by
weakening $A_u, \phi_1\wedge\phi_2 \vdash \phi$. We then continue:

{\footnotesize
\begin{prooftree}
\AxiomC{$A_u\vdash_M B_u\setminus\{\phi\}, \phi_1\wedge\phi_2, C_u, \omega$}
\AxiomC{$A_u, \phi_1\wedge\phi_2 \vdash \phi$}
\RightLabel{($\vdash\vdash_M$)}
\UnaryInfC{$A_u, \phi_1\wedge\phi_2 \vdash_M \phi$}
\RightLabel{(M-cut)}
\BinaryInfC{$A_u \vdash_M B_u, C_u, \omega $}
\end{prooftree}
}

From Lemma~\ref{thm_omegaB}, $A_u, \omega\vdash_M B_u, C_u $ has a polynomial-size proof.

\begin{prooftree}
\AxiomC{$A_u, \omega \vdash_M B_u$}
\AxiomC{$A_u\vdash_M B_u, C_u, \omega $}
\RightLabel{(M-cut)}
\BinaryInfC{$A_u\vdash_M B_u, C_u$}
\end{prooftree}
This completes the proof of the induction.

From this induction, the theorem can be derived as follows.
The induction hypothesis applied to the root $u$ of the tableau  gives polynomial-size \MLK proofs of $A_u \vdash_M B_u, C_u$. By definition $A_u= \Gamma$ and $B_u= \Delta$. 
Finally, $C_u=D_u=\emptyset$, because for every ignorable type-2 branch $\mathcal{B}$, the branch $\theta(\mathcal{B})$ is inside the tableau. 

Since all our steps are constructive we prove a p-simulation.
\qed
\end{proof}

\section{Separating \OTAB and \MLK}
\label{sec:separation-OTAB-MLK}

In the previous section we showed that \MLK p-simulates \OTAB. Here we prove that the two systems are in fact exponentially separated.

\begin{lemma}\label{thm_incon}
In every \OTAB tableau for $\Gamma \vdash_M \Delta$ with inconsistent $\Gamma$, any completed branch is $T$-closed.
\end{lemma}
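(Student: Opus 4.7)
My plan is to derive a contradiction from the assumption that some completed branch $\mathcal{B}$ fails to be $T$-closed, by exhibiting a model of $\Gamma$ and contradicting the assumed inconsistency. The natural candidate for such a model is $\mathrm{At}(\mathcal{B})$, i.e.\ the assignment that sets exactly the atoms appearing positively in $\mathcal{B}$ to true. This is exactly the setup that Lemma~\ref{thm_atsat} has already prepared for us.

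Concretely, suppose for contradiction that $\mathcal{B}$ is completed but not $T$-closed. Pick any node $u$ from the initial (linear) part of the tableau; then by Definition~\ref{def_ab} we have $A_u = \Gamma$, and $u$ lies on $\mathcal{B}$. Applying Lemma~\ref{thm_atsat} to this $u$ yields $\mathrm{At}(\mathcal{B}) \models A_u$, i.e.\ $\mathrm{At}(\mathcal{B}) \models \Gamma$. This contradicts the assumption that $\Gamma$ is inconsistent, so every completed branch must in fact be $T$-closed.

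I do not expect any real obstacle here: the work has already been done in Lemma~\ref{thm_atsat}, which is the semantic invariant tracking how $A_u$ evolves along a completed, non-$T$-closed branch. The only thing to double-check when writing this up is that the hypotheses of Lemma~\ref{thm_atsat} are indeed met, namely that $\mathcal{B}$ is completed and not $T$-closed (which we are assuming), and that $A_u$ at the initial node is literally $\Gamma$ (which is immediate from Definition~\ref{def_ab}). So the whole proof should fit in essentially two sentences.
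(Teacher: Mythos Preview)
Your proposal is correct and is essentially the same argument as the paper's: assume a completed, non-$T$-closed branch $\mathcal{B}$, invoke Lemma~\ref{thm_atsat} at an initial node (where $A_u=\Gamma$) to get $\mathrm{At}(\mathcal{B})\models\Gamma$, contradicting inconsistency of $\Gamma$. The paper phrases it slightly more tersely (``$\mathrm{At}(\mathcal{B})$ is a model for all initial $T$-formulae''), but the content is identical.
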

\begin{proof}
If a branch $\mathcal{B}$ is completed but not $T$-closed, then via Lemma~\ref{thm_atsat},  $\mathrm{At}(\mathcal{B})$ is a model for all initial $T$-formulae. But these form an inconsistent set. \qed
\end{proof}

\begin{theorem} \label{thm:sep-TAB-MLK}
\OTAB does not simulate \MLK.
\end{theorem}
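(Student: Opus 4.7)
My plan is to leverage Lemma~\ref{thm_incon} to transfer a known exponential separation between classical analytic tableaux and Gentzen's \LK (with cut) to the minimal-entailment setting. The crucial observation is that for an inconsistent antecedent $\Gamma$ and empty succedent, the minimal-entailment-specific machinery of \OTAB---the two kinds of ignorability and inter-branch comparison---becomes inert, and the tableau is forced to behave as a classical signed analytic tableau refutation of $\Gamma$.

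For the hard sequents I would take $\Gamma_n \vdash_M \emptyset$ where $\{\Gamma_n\}$ is any family of inconsistent propositional theories that admit polynomial-size \LK-refutations but require exponential-size refutations in the classical analytic tableau calculus. Such families are standard witnesses of the exponential speed-up of cut over cut-free proof systems, and explicit examples can be extracted from the survey \cite{Urq95}.

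The \MLK upper bound is then immediate: the polynomial-size \LK-refutation of $\Gamma_n$ is lifted to an \MLK-proof of $\Gamma_n \vdash_M \emptyset$ of essentially the same size via a single application of $(\vdash\vdash_M)$. For the \OTAB lower bound I would argue as follows. Since the initial tableau contains only $T$-signed formulae, and each rule of Figure~\ref{fig_contype} applied to a $T$-formula produces only $T$-formulae, no $F$-signed node is ever generated. This immediately rules out $F$-closed and $TF$-closed branches, and also ignorable type-1 branches (which require an $F\neg a$ node). Moreover, by Lemma~\ref{thm_incon} every completed branch is $T$-closed, and this precludes any branch from being ignorable type-2 (such a branch would require a companion branch that is completed but \emph{not} $T$-closed). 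Consequently every branch must be $T$-closed, and the \OTAB tableau is essentially a classical signed analytic tableau refutation of $\Gamma_n$. Its size is therefore at least the minimum classical tableau refutation size of $\Gamma_n$, which is exponential by our choice of $\Gamma_n$.

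The only non-routine step is the careful verification that, in the inconsistent and empty-succedent case, none of \OTAB's escape hatches---the two ignorability conditions and the $F$-based closure conditions---can be used to shorten the tableau. Once this reduction to the classical setting is in place, the separation follows directly from the known cut/cut-free speed-ups for propositional proof systems.
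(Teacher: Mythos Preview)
Your proposal is correct and follows essentially the same approach as the paper: reduce \OTAB on an inconsistent antecedent with empty succedent to a classical analytic tableau via Lemma~\ref{thm_incon}, invoke a known exponential lower bound for analytic tableaux, and lift the short \LK-refutation through $(\vdash\vdash_M)$ for the \MLK upper bound. The only notable difference is that the paper pins down a concrete family---D'Agostino's formulas $\phi_n=\bigwedge H_n$ from \cite{MD92} with their explicit $n!$ branch lower bound---rather than appealing generically to \cite{Urq95}, and your observation that no $F$-signed node is ever created (since the succedent is empty) is a slightly crisper way of dispatching the $F$-closure and type-1 ignorability cases than the paper's argument.
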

\begin{proof}
We consider Smullyan's \emph{analytic tableaux} \cite{Smu68}, and use the hard sets of inconsistent formulae in \cite{MD92}.

For each natural number $n>0$ we use variables $p_1, \dots, p_n$. Let $H_n$ be the set of all $2^n$ clauses of length $n$ over these variables (we exclude tautological clauses) and define $\phi_n=\bigwedge H_n$. Since every model must contradict one of these clauses, $\phi_n$ is inconsistent. We now consider the sequents $\phi_n \vdash_M$.

Since classical entailment is included in minimal entailment there must also be an \OTAB tableau for these formulae. Every type-1 ignorable branch in the \OTAB tableau is completed and therefore also $T$-closed by Lemma~\ref{thm_incon}. The tableau cannot contain any  type-2 ignorable branches as every completed branch is $T$-closed. Hence the \OTAB tableaux for $\phi_n \vdash_M$ are in fact analytic tableaux and have $n!$ many branches by Proposition~1 from \cite{MD92}.

Since the examples are easy for truth tables \cite{MD92}, they are also easy for $\LK$ and the rule ($\vdash\vdash_M$) completes a polynomial-size proof for them in \MLK. 
\qed
\end{proof}

\section{Conclusion}
\label{sec:conclusion}

In this paper we have clarified the relationship between the proof systems \OTAB and \MLK for minimal entailment.  While cut-free sequent calculi typically have the same proof complexity as tableau systems, \MLK is not complete without M-cut \cite{Oli92}, and also our translation uses M-cut in an essential way (however, we can eliminate \LK-cut).

We conclude by mentioning that there are further proof systems for minimal entailment and circumscription, which have been recently analysed from a proof-complexity perspective \cite{BC14-ECCC}. In particular, Niemel\"{a} \shortcite{Nie96} introduced a tableau system \NTAB for minimal entailment for clausal formulas, and Bonatti and Olivetti \shortcite{BO02} defined an analytic sequent calculus \CIRC for circumscription. Building on initial results from \cite{BO02} we prove in \cite{BC14-ECCC} that $\NTAB \leq_p \CIRC \leq_p \MLK$ is a chain of proof systems of strictly increasing strength, \ie in addition to the p-simulations we obtain separations between the proof systems. 

Combining the results of \cite{BC14-ECCC} and the present paper, the full picture of the simulation order of proof systems for minimal entailment emerges. In terms of proof size, \MLK is the best proof system as it p-simulates all other known proof systems. However, for a  complete understanding of the simulation order  some problems are still open.
While the separation between \OTAB and \MLK from Theorem~\ref{thm:sep-TAB-MLK} can be straightforwardly adapted to show that \OTAB also does not simulate \CIRC, we leave open whether the reverse simulation holds. Likewise, the relationship between the two tableau systems \OTAB and \NTAB is not clear.

It is also interesting to compare our results to the complexity of theorem proving procedures in other non-monotonic logics as default logic \cite{BMMTV11} and autoepistemic logic \cite{Bey13}; cf.\ also \cite{ET01} for results on proof complexity in the first-order versions of some of these systems. In particular, \cite{BMMTV11} and \cite{Bey13} show very close connections between proof lengths in some sequent systems for default and autoepistemic logic and proof lengths of classical \LK, for which any non-trivial lower bounds are a major outstanding problem. It would be interesting to know if a similar relation also holds between \MLK and \LK.

%\bibliographystyle{aaai}
%\bibliography{compl} 

\end{document}